\documentclass[11pt,a4paper]{article}

\usepackage[T1]{fontenc}
\usepackage[utf8]{inputenc}
\usepackage[UKenglish]{babel}
\usepackage{lmodern}
\usepackage{microtype}
\usepackage[a4paper,margin=1in]{geometry}
\usepackage{amsmath,amssymb,amsthm}
\usepackage{thmtools}
\usepackage{enumerate}
\usepackage{graphicx}
\usepackage{xcolor}
\usepackage{tikz}
\usepackage{authblk}
\usepackage[hidelinks]{hyperref}

\hypersetup{
  pdftitle={Geometric Optimization Parameterized by Piercing Complexity},
  pdfauthor={Aritra Banik, Rajiv Raman, and Saurabh Ray}
}

\theoremstyle{plain}
\newtheorem{theorem}{Theorem}
\newtheorem{lemma}[theorem]{Lemma}
\newtheorem{corollary}[theorem]{Corollary}
\theoremstyle{definition}
\newtheorem{definition}[theorem]{Definition}
\theoremstyle{remark}
\newtheorem{remark}[theorem]{Remark}
\theoremstyle{plain}
\newtheorem*{theorem*}{Theorem}

\newcommand{\poly}{\text{poly}}
\newcommand{\arm}{\text{arm}}

\title{Geometric Optimization Parameterized by Piercing Complexity}
\author[1,2]{Aritra Banik\thanks{\texttt{aritra@niser.ac.in};
\href{https://orcid.org/0000-0002-7544-6125}{ORCID 0000-0002-7544-6125}}}
\author[3]{Rajiv Raman\thanks{\texttt{rajiv@iiitd.ac.in};
\href{https://orcid.org/0009-0000-8013-9421}{ORCID 0009-0000-8013-9421}}}
\author[4]{Saurabh Ray\thanks{\texttt{saurabh.ray@nyu.edu};
\href{https://orcid.org/0009-0005-6708-125X}{ORCID 0009-0005-6708-125X}}}
\affil[1]{National Institute of Science Education and Research, Bhubaneswar, India}
\affil[2]{Homi Bhabha National Institute, Training School Complex, Anushakti Nagar, Mumbai 400094, India}
\affil[3]{IIIT-Delhi, India}
\affil[4]{NYU Abu Dhabi, United Arab Emirates}
\date{}

\graphicspath{{./fig/}}

\newcommand{\R}{\mathcal{R}}
\newcommand{\B}{\mathcal{B}}

\newcommand{\opt}{\textsc{Opt}}
\newcommand{\local}{\textsc{Local}}

\newcommand{\hide}[1]{}

\newcommand{\Tr}{\operatorname{Tr}}
\newcommand{\exc}{\operatorname{exc}}
\newcommand{\se}{\textsc{se}}
\newcommand{\ap}{\textsc{ap}}
\begin{document}

\maketitle
\begin{abstract}

Packing and covering problems for geometric regions have been studied under
many notions of complexity, including VC-dimension, union complexity,
shallow-cell complexity, and fatness.  Although these restrictions often yield
constant-factor approximation algorithms, they do not by themselves generally
lead to PTASs.  A recurring feature of known hardness constructions is that
one region may be \emph{pierced} by many others: a region $B$ pierces $A$ when
$A\setminus B$ is disconnected.

We study geometric instances through the \emph{piercing degree}.  Since
piercing is symmetric for Jordan regions, this is the maximum degree of the
corresponding piercing graph.  Our main result is that, for
every fixed piercing degree, the standard local-search algorithms give PTASs
for the unweighted \emph{Discrete Independent Set} and \emph{Set Cover}
problems.  The proof constructs a sublinear balanced separator for an
appropriate locality graph and applies it adaptively throughout the recursive
local-search analysis.  This guarantee depends only on the piercing degree; in
particular, it places no bound on the number of components created by an
individual piercing pair.

We also prove a polynomial shallow-trace bound depending only on the piercing
degree.  As consequences, for every fixed piercing degree, weighted Set Cover
admits a deterministic $C_r$-approximation and weighted Discrete Independent
Set admits a deterministic $O(r+1)$-approximation.  These
results extend the known guarantees for non-piercing families and apply, for
example, to axis-parallel rectangles when every rectangle is pierced by only a
bounded number of other rectangles.

\end{abstract}

\medskip
\noindent\textbf{Keywords:}
Geometric set cover, geometric discrete independent set, approximation
algorithms, PTAS, parameterized complexity, piercing complexity, non-piercing
regions, independent set, axis-parallel rectangles.

\section{Introduction}
\label{sec:intro}

Given a finite set $X$ and a collection $\mathcal{S}$ of subsets of
$X$ whose union covers $X$, the \emph{Set Cover} problem asks for a smallest
subcollection $\mathcal{S}'$ that also covers $X$.  In the \emph{Discrete
Independent Set} problem, the objective is to select a largest subcollection
$\mathcal{S}'\subseteq\mathcal{S}$ such that each element of $X$ is contained
in at most one set in $\mathcal{S}'$.  Both problems have been extensively
studied, and tight bounds on their approximability are known in the general
setting~\cite{feige1998threshold,lov1975notdef,hastad1999clique}.

Improved approximation algorithms can often be obtained when these problems
are restricted to geometric settings.  Such improvements are typically
enabled by bounding an appropriate notion of \emph{complexity} of the
underlying geometric objects.  Well-studied complexity measures include
dimension, VC-dimension, union complexity and its generalization shallow-cell
complexity, and fatness, among others~\cite{mustafa2022sampling,
chan2012weighted}.  In particular, small
VC-dimension or shallow-cell complexity implies the existence of small
$\epsilon$-nets~\cite{mustafa2022sampling}, which in turn yield approximation
algorithms via LP-rounding.

A representative example is the Set Cover problem defined by unit disks and
points in the plane, which admits a PTAS~\cite{BasuRoy2018}.  In contrast, when
disks are replaced by regions that remain geometrically close to disks, such as
nearly congruent, nearly circular ellipses, the problem becomes APX-hard despite remaining small
with respect to the above complexity measures~\cite{DBLP:journals/comgeo/ChanG14}.
Chan and Grant~\cite{DBLP:journals/comgeo/ChanG14} gave several further examples
of simple geometric objects, including axis-parallel strips and wedges, for
which both Set Cover and Discrete Independent Set are APX-hard; see
also~\cite{Har-PeledQ15}.  A key structural feature of these constructions is
that many pairs of regions \emph{pierce} one another.

In contrast, PTASs are known for a wide range of geometric packing and
covering problems.  In many cases, these algorithms use a simple local-search
framework~\cite{ChanH12,mustafa2010improved,DBLP:conf/walcom/AschnerKMY13,
gibson-pirwani,BasuRoy2018,RR18}.  A common condition in these settings is that
the regions are \emph{non-piercing}\footnote{Examples include disks,
axis-parallel squares, axis-parallel unit-height rectangles, and homothets of a
convex region.}, meaning that the difference of any region with another is
connected; see Definition~\ref{defn:nonpiercing}.  This contrast motivates the
central question of this paper: can one quantify the amount of piercing and
thereby interpolate between the non-piercing regime and unrestricted families?

\smallskip\noindent
\emph{Quantifying piercing.}
For two regions $A$ and $B$, we say that $B$ pierces $A$ if $A\setminus B$ is
disconnected.  Piercing is symmetric for Jordan regions.  The
\emph{piercing degree} $r$ of a family is the maximum, over all its regions
$A$, of the number of regions that pierce $A$.  Our results require no bound on the
number of connected components of $A\setminus B$: one piercing pair may have
arbitrarily complicated boundary interaction.

Our main result, stated informally below, shows that the piercing degree alone
is enough to recover local-search PTASs.  The exchange radius has a polynomial
dependence on $1/\epsilon$ for every fixed $r$.

\begin{theorem*}[Informal]
Let $P$ be a finite point set and let $\Gamma$ be a set of $n$ regions with
piercing degree at most $r$.  For
every $\epsilon\in(0,1]$, $t$-local search with
$t=O(((r+1)/\epsilon)^4\log^3(2+(r+1)/\epsilon))$ yields a
$(1+\epsilon)$-approximation for Set Cover and a
$(1-\epsilon)$-approximation for Discrete Independent Set.  The running time is
$|P|n^{O(t)}$.
\end{theorem*}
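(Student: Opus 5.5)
The plan follows the standard local-search template of Mustafa--Ray and Chan--Har-Peled. Let $\local$ be a $t$-locally optimal solution and $\opt$ an optimal one; we may assume they are disjoint. We build a \emph{locality graph} $G$ on vertex set $\local\cup\opt$ with the usual property. For Set Cover, every point $p\in P$ must be covered by some $L\in\local$ and some $O\in\opt$ that are adjacent in $G$. For Independent Set, any $L$ and $O$ sharing a point must be adjacent. The ingredients we then need are:
\begin{itemize}
\item[(i)] $G$ has balanced separators of size $O(\sqrt{(r+1)\,m}\cdot\mathrm{polylog})$ on $m$ vertices;
\item[(ii)] these separators are available hereditarily, on every subgraph reached by the recursive Frederickson-type division.
\end{itemize}
Given these, the division cuts $G$ into pieces of size at most $t$, with total boundary at most $\epsilon|\local\cup\opt|$. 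Local optimality, applied piece by piece, then gives $|\local|\le(1+\epsilon)|\opt|$ for Set Cover and $|\local|\ge(1-\epsilon)|\opt|$ for Independent Set. The bound on $t$ comes from solving the division recursion once the separator exponent and polylog are known. The running time $|P|n^{O(t)}$ is immediate, since each step enumerates $n^{O(t)}$ swaps and checks them in time $\poly(|P|,n)$. The number of improvement steps is at most $n$, because the objective is integral and monotone.

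\emph{Constructing $G$.} We start from the non-piercing case, where it is known (Raman--Ray) that a planar support graph exists for $\local\cup\opt$ with respect to $P$. That construction uses the fact that every $A\setminus B$ is connected, which lets one shrink regions inside one another and contract cells until a planar dual remains. With piercing degree $r$, I would first isolate the set $E_\pi$ of piercing pairs. Every region lies in at most $r$ of these pairs, so $E_\pi$ spans a graph of maximum degree $r$. The aim is to modify each region locally so that the family becomes non-piercing while the covering relations are preserved, up to edges we add explicitly. One candidate is to cut each region $A$ along the boundaries of the $\le r$ regions that pierce it. Its pieces are kept glued together by adding the corresponding $E_\pi$ edges, or by forming ``clusters''. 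The goal is a graph $G$ that is a planar support graph on a refined family, with each original region mapped to a connected set of vertices touching $O(r)$ non-planar edges.

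Since a region $A\setminus B$ may have unboundedly many components, we cannot split into components. Instead I would contract all components of $A$ into a single vertex, which keeps $A$ as one vertex of $G$. The extra edges then form a bounded-degree graph layered on a planar one, and such graphs have separators of size $O(\sqrt{(r+1)m})$ up to logs. This follows from the separator theorem for graphs with bounded-degree additions, or by applying the planar separator and then adding all $E_\pi$-neighbours of separator vertices. The latter costs a factor of $r+1$, and a weighted-separator argument recovers the $\log$ factors. Property (ii) needs to be checked for the subgraphs induced on separator pieces, so I would prove the separator for every induced subgraph of $G$ rather than for $G$ alone. This is the ``adaptive'' application mentioned in the abstract.

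\emph{Main obstacle.} I expect the hard step to be showing that the locality property survives the non-planar corrections: every point must still have an adjacent $\local$/$\opt$ pair, and the planar part must genuinely be planar even though $A\setminus B$ can have arbitrarily many components. I would first try an induction on the number of piercing pairs. Take a piercing pair $(A,B)$ and replace $B$ by $B\setminus A$ or by $B\cup A$-type surrogates so that the pair no longer pierces, recording the edge $AB$. Then argue that this creates no new piercing among other pairs, using the Jordan-region symmetry of piercing. If this blows up the degree, the fallback is to accept a separator of slightly worse exponent, say $m^{1-\delta}$. This is consistent with the fourth-power dependence of $t$ on $(r+1)/\epsilon$ in the statement, and any sublinear separator suffices for the PTAS.
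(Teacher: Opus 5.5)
\textbf{There is a genuine gap, in ingredient (i), which is where all the difficulty lies.} Your reduction to a sublinear separator for a locality graph, followed by a Frederickson-type division and piece-by-piece exchanges, is the right template. The justification you give for (i) is false as a graph-theoretic statement: a planar graph plus a graph of maximum degree $r$ need not have any sublinear separator. Take the planar part empty and the extra edges to be a $3$-regular expander, or a Hamiltonian cycle plus a perfect matching, which is how a random cubic graph decomposes. The bisection width is then linear. Your concrete repair, a planar separator augmented by the $E_\pi$-neighbours of its vertices, fails for the same reason: a piercing edge can join the two sides between two non-separator vertices anywhere. So any separator must come from how these edges sit in the plane, not from their bounded degree.

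The planar part itself is also not constructed. Identifying the unboundedly many pieces of a pierced region into one vertex is not a minor operation and can destroy planarity. Replacing $B$ by $B\setminus A$ or $A\cup B$ changes which points $B$ contains, so the locality property for the original traces is lost. You flag this as the main obstacle, and it stays unresolved, as does the fallback $m^{1-\delta}$ bound.

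\textbf{The missing idea is to bound odd crossings in a drawing of the locality graph, not to build a planar support.} The paper never builds a planar support for an $r$-piercing family. It applies Raman--Ray only to random non-piercing subfamilies, which gives hereditary linear density and VC-dimension at most $4(r+1)$. The separator then comes from a drawing of a \emph{minimum} locality graph.
\begin{enumerate}[(1)]
\item Each edge is drawn as two arms inside its endpoint regions, meeting at an essential witness point.
\item A parity argument shows that two independent edges cross an odd number of times only if one witness or anchor lies in the other's region, or the two regions pierce.
\item This bounds the number of odd pairs by $O(\Delta_0(rm+T))$, and the Pach--T\'oth odd-crossing bisection theorem turns this into a separator.
\item To keep $T$ and $\Delta_0$ under control, traces of size greater than $k$ are first hit by an $\epsilon$-net, and high-degree vertices are put into the separator.
\item The result is a separator for a freshly built locality graph, not for induced subgraphs of a fixed one. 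So the paper recomputes the locality graph for the restricted traces at every node of the recursion; this is the ``adaptive'' recursion, rather than heredity of one graph $G$.
\end{enumerate}
The resulting $s^{3/4}$ separator is what produces the fourth-power dependence of $t$ on $(r+1)/\epsilon$.
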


Thus the algorithms run in polynomial time for constant $r$ and give QPTASs
when $r$ is polylogarithmic in $n$.

\smallskip\noindent
\emph{Applications.}
Many geometric optimization problems employ idealized shape models: sensor
coverage by disks or wedges, map labels by rectangles, and service regions
defined by $\ell_p$-norm balls.  Non-piercing regions~\cite{RR18}
allow irregular and non-convex shapes, but the non-piercing requirement can be
too restrictive.  The piercing degree gives a topological relaxation in which
each region is pierced by only a bounded number of other regions, while each
such interaction may be arbitrarily complicated.  We use these
examples as motivation rather than as a claim that the piercing degree is
always small in applications.

This condition is incomparable with low density~\cite{Har-PeledQ17}: it is
topological and allows arbitrarily deep nesting, while a low-density family
may contain one sufficiently long region pierced by many well-separated
smaller regions.

\smallskip\noindent
\emph{Axis-parallel rectangles.}
Both Set Cover and Discrete Independent Set are APX-hard for axis-parallel
rectangles~\cite{DBLP:journals/comgeo/ChanG14}.  For continuous Independent Set,
polynomial-time constant-factor
approximations~\cite{DBLP:conf/focs/Mitchell21,DBLP:conf/soda/GalvezKMMPW22} and a
QPTAS~\cite{AW14} are known.  For Set Cover by rectangles and points,
$\epsilon$-net methods give an approximation factor logarithmic in the optimum
cover size~\cite{BronnimannG95,EvenRS05}.  Our results give PTASs for both unweighted
problems when each rectangle is pierced by only a bounded number of the other
rectangles.

\smallskip\noindent
\emph{Weighted problems and shallow traces.}
The same parameter also controls the number of shallow traces.  For every
subfamily of $m$ regions with piercing degree at most $r$, the number of
distinct traces containing at most $k$ regions is
$O_r(1+m(k+1)^{4r+3})$, where $O_r(\cdot)$ hides constants that depend only on
$r$.  This bound, together with quasi-uniform
sampling~\cite{chan2012weighted}, yields a deterministic $C_r$-approximation for
weighted Set Cover, where $C_r$ is a constant for every fixed $r$.  A direct
LP-rounding argument yields a deterministic $O(r+1)$-approximation for
weighted Discrete Independent Set.  We do not claim a uniform
polynomial dependence of $C_r$ on $r$.

\smallskip\noindent
\emph{Technical contribution.}
Our unweighted algorithms retain the standard exhaustive local-search
procedure; the main work is in its analysis.  For any two feasible solutions,
we consider a minimum locality graph whose edges certify the point constraints
shared by the two solutions.  Bounded piercing allows its vertices to be
partitioned into a bounded number of non-piercing families, to which the planar
support theorem of Raman and Ray~\cite{RR18} applies.  This gives hereditary
linear edge density and bounded VC-dimension.

Linear density alone does not imply the separators needed for local search.
We draw the locality edges inside the regions and control the parity of
crossings.  The odd-crossing bisection theorem of Pach and
T{\'o}th~\cite{PachToth} then gives a sublinear balanced separator after large
traces are removed by an $\epsilon$-net.  Recomputing this separator for every
restricted locality instance gives an adaptive recursive division with
sublinear total overlap, which is exactly what the local-exchange argument
requires.  The shallow-trace bound used for the weighted results follows from
linear sparsity of two-element traces and the signature theorem of Ackerman,
Keszegh, and P\'alv\"olgyi~\cite{AKP}.

\begin{remark}[Correction to the conference version]
The conference version~\cite{banik_et_al:LIPIcs.ICALP.2026.21} used a
lens-bypassing argument whose key monotonicity lemma is false; hence its
normalization and the resulting arrangement decomposition were not
established. Its shallow-trace proof also failed to account for collisions
between distinct traces after sampling. These are proof errors, not
counterexamples to the two unweighted PTAS statements. The present version
avoids lens bypassing: it obtains the required decomposition from locality
graphs through an odd-crossing separator and adaptive recursion, and proves
the shallow-trace bound using injective signatures. The statements and bounds
in this version are the corrected ones.
\end{remark}

\subsection{Related Work}
\label{sec:related}
For packing problems with \emph{fat objects}\footnote{There are many definitions of fatness, but a definition that is sufficient for us is that a region is fat if the
ratio of the radii of its circumscribing and inscribing balls is bounded by a constant.} in bounded dimension, e.g., balls or cubes, Erlebach et al.~\cite{ErlebachL10} obtained a PTAS building on the work of Hochbaum and Maass~\cite{hochbaum-mass} for unit disks.
For regions that are not fat, Adamaszek and Wiese~\cite{AW14} obtained a QPTAS (including weighted settings)
for regions in the plane under the restriction that each region in the set is path-connected.
In the discrete setting, Chan and Grant~\cite{DBLP:journals/comgeo/ChanG14} obtained hardness results for packing
and covering problems with simple geometric regions in the plane such as axis-parallel strips.

For covering problems, approximation algorithms primarily build on $\epsilon$-nets. Haussler and Welzl~\cite{HausslerW87} showed $\epsilon$-nets of size $O(d/\epsilon\log(1/\epsilon))$ for set systems with bounded VC-dimension, leading to logarithmic-factor approximations by Br{\"o}nnimann and Goodrich~\cite{BronnimannG95} and Even et al.~\cite{EvenRS05}. Clarkson and Varadarajan~\cite{ClarksonV07} improved $\epsilon$-nets for geometric systems, culminating in Varadarajan's~\cite{V10} \emph{quasi-uniform sampling}, which
was refined by Chan et al.~\cite{chan2012weighted} for systems with shallow-cell complexity $\phi(n)$. Their result yields corresponding small $\epsilon$-nets. It also gives an approximation factor $O(\log\phi(n))$ for weighted Set Cover, though with large constants even in simple cases.

For \emph{unweighted} problems, the unifying tool has been the \emph{local search framework}, which iteratively improves a feasible solution with small changes. This yielded PTASs for independent set~\cite{ChanH12}, hitting set~\cite{mustafa2010improved}, and Set Cover and dominating set~\cite{BasuRoy2018}. Raman and Ray~\cite{RR18} showed that set systems from \emph{non-piercing regions} admit planar supports, implying PTAS results for many packing and covering problems via local search, and also for demand/capacitated versions~\cite{DBLP:journals/dcg/RamanR22}. Raman and Singh~\cite{raman2025supportsouterplanarboundedtreewidth} extended these results to higher-genus surfaces. The limitations of the local search approach were studied by
Jartoux and Mustafa~\cite{Mustafa18}.

\section{Preliminaries}
\label{sec:prelims}
We use the term ``region'' for the closed Jordan domain bounded by a simple
Jordan curve. Thus every region is simply connected. We denote the boundary
of a region $\alpha$ by $\partial\alpha$.
No bound is imposed on the number of intersections of two boundaries or on
the number of components of $\alpha\setminus\beta$ for any two regions
$\alpha$ and $\beta$.

For a graph $H=(V,E)$, we write $E(H)=E$, $H[W]$ for the subgraph induced by
$W\subseteq V$, and $H-W=H[V\setminus W]$. For $A\subseteq V$, its open
neighborhood is
$N_H(A)=\{v\in V\setminus A:uv\in E(H)\text{ for some }u\in A\}$.
\begin{definition}[Piercing and non-piercing~\cite{RR18}]
\label{defn:nonpiercing}
We say that $\beta$ \emph{pierces} $\alpha$ if
$\alpha\setminus\beta$ is disconnected.
A set $\mathcal{F}$ of regions is \emph{non-piercing} if
$\alpha\setminus\beta$ is connected for every ordered pair
$\alpha,\beta\in\mathcal{F}$.
\end{definition}

\begin{lemma}[Symmetry of piercing]
\label{lem:piercing-symmetry}
For Jordan regions, $\alpha$ pierces $\beta$ if and only if $\beta$ pierces
$\alpha$.
\end{lemma}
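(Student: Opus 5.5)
The plan is to work on the sphere $S^2$, where each region $\alpha$ is a closed disc whose complement $S^2\setminus\alpha$ is an open disc. The main tool will be \emph{Janiszewski's theorem}: if $A,B\subseteq S^2$ are compact, $A\cap B$ is connected, and neither $A$ nor $B$ separates two points $x,y$, then $A\cup B$ does not separate $x,y$. The intermediate goal is a criterion for "$\beta$ pierces $\alpha$" that is visibly symmetric in $\alpha$ and $\beta$.

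\emph{Step 1 (reduction to the boundary).} Assume $\alpha\setminus\beta$ is disconnected. I will first show that every component $C$ of $\alpha\setminus\beta$ meets $\partial\alpha$. Suppose instead that some $C$ lies in $\operatorname{int}\alpha$. Then $C$ is relatively clopen in $\operatorname{int}\alpha\setminus\beta$. This would make $\beta$ together with $S^2\setminus\operatorname{int}\alpha$ separate $C$ from the open disc $S^2\setminus\alpha$. Applying Janiszewski to the two compact connected sets $\beta$ and $S^2\setminus\operatorname{int}\alpha$ then forces $\beta\cap(S^2\setminus\operatorname{int}\alpha)$ to be disconnected, and I would have to handle that configuration separately. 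I therefore expect this step to be cleaner if I bypass it: I will instead pick points $x,y\in\alpha\setminus\beta$ in different components and push them along $\alpha\setminus\beta$ to $\partial\alpha$ whenever possible. Once $x,y\in\partial\alpha\setminus\beta$, both arcs of $\partial\alpha$ joining $x$ and $y$ must meet $\beta$; otherwise one of these arcs would connect $x$ and $y$ inside $\alpha\setminus\beta$. So $\partial\alpha\setminus\beta$ has points $x,y$ separated along $\partial\alpha$ by points $u,v\in\partial\alpha\cap\beta$.

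\emph{Step 2 (transfer to $\beta$).} Next I connect $u$ to $v$. If $u$ and $v$ lie in one component $K$ of $\alpha\cap\beta$, then $K$ is a "cross-cut" of $\alpha$ landing on both arcs. Now choose points $x',y'$ of $\partial\beta\setminus\alpha$ near $u$ and $v$; these exist because $u,v\in\partial\alpha$ and $\beta$ leaves $\alpha$ there, which is where the boundary case analysis occurs. I would then argue by contradiction: if $\beta\setminus\alpha$ were connected, an arc in $\beta\setminus\alpha$ from $x'$ to $y'$ together with $K$ would form a closed curve. That curve separates $x$ from $y$ in $S^2$ while avoiding $\alpha\setminus\beta$, or else it yields a path from $x$ to $y$ outside $\beta$. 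Either outcome contradicts Janiszewski applied to $A=\beta$ and $B=S^2\setminus\operatorname{int}\alpha$, which is a closed disc. If instead $u$ and $v$ lie in different components of $\alpha\cap\beta$, then Janiszewski applied to $\alpha$ and $\beta$ shows that $\alpha\cup\beta$ encloses a hole. Following $\partial\beta$ around that hole will produce two components of $\beta\setminus\alpha$.

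\emph{Main obstacle.} Since no general position is assumed, the delicate point is Step 2. I must produce two points of $\beta\setminus\alpha$ that no path in $\beta\setminus\alpha$ can join. Doing so means converting the separation "$x,y$ separated in $\alpha$ by $\alpha\cap\beta$" into "$\alpha\cap\beta$ separates two parts of $\beta$". My first approach is the symmetric Janiszewski formulation:
\[
\beta \text{ pierces } \alpha \iff \exists\,x,y\in\alpha\setminus\beta \text{ separated in } S^2 \text{ by } \beta\cup(S^2\setminus\operatorname{int}\alpha).
\]
I would then show that the right-hand side is equivalent to disconnectedness of $\beta\cap(S^2\setminus\operatorname{int}\alpha)$ relative to $\partial\alpha$, which is a condition that can be rewritten symmetrically in terms of how components of $\alpha\cap\beta$ meet both boundaries. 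If this bookkeeping becomes unwieldy, my fallback is a Mayer--Vietoris computation in Čech cohomology for the compact sets $\alpha$ and $\beta$. Because $\alpha$ and $\beta$ are contractible, this gives $\check H^1(\alpha\cup\beta)\cong\tilde{\check H}{}^0(\alpha\cap\beta)$, and I would combine it with Alexander duality on closures of $\alpha\setminus\beta$.
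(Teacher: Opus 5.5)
Your proposal has a genuine gap. In the generality you aim for (``since no general position is assumed'') it cannot be closed, because without general position the statement is false.

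\textbf{Counterexample.} Let $\alpha$ be the closed unit disk. Let $\beta$ be a thin closed lens bounded by two circular arcs from $(-1,0)$ to $(1,0)$, lying inside $\alpha$ and touching $\partial\alpha$ only at its two tips. Attach to it a thin vertical strip that rises from the lens and crosses $\partial\alpha$ transversally near $(0,1)$. This $\beta$ is a Jordan region. The set $\alpha\setminus\beta$ has three components: one below the lens, and one on each side of the strip above it. By contrast, $\beta\setminus\alpha$ is the single piece of the strip outside the disk. So $\beta$ pierces $\alpha$, but $\alpha$ does not pierce $\beta$.

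\textbf{Where your argument fails.} It breaks at the sentence asserting that points of $\partial\beta\setminus\alpha$ exist near $u$ and $v$ ``because $\beta$ leaves $\alpha$ there.'' In this example, whatever $x,y$ you pick in different components, one of $u,v$ must be a tip $(\pm1,0)$. At a tip, $\beta$ touches $\partial\alpha$ from inside and has no points outside $\alpha$ nearby. Neither Janiszewski's theorem nor your \v{C}ech fallback can rescue this. The Mayer--Vietoris identity you quote is symmetric in $\alpha,\beta$, but it concerns $\alpha\cap\beta$ and $\alpha\cup\beta$, not the two differences, and the differences are exactly where the asymmetry sits.

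\textbf{What the paper does.} The paper's proof tacitly works under a general-position hypothesis: $\partial\alpha$ and $\partial\beta$ meet in finitely many proper crossings, $2h$ of them. Under that hypothesis the lemma is a short count.
\begin{enumerate}[(1)]
\item The $h$ arcs of $\partial\beta$ inside $\alpha$ are pairwise disjoint crosscuts, which cut $\alpha$ into $h+1$ cells.
\item If $a$ of these cells lie in $\beta$, the remaining $h+1-a$ cells are exactly the components of $\alpha\setminus\beta$.
\item The $a$ cells in $\beta$ are exactly the components of $\operatorname{int}\alpha\cap\operatorname{int}\beta$, so reversing the roles of $\alpha$ and $\beta$ gives the same $h$ and $a$.
\item Hence $\beta\setminus\alpha$ also has $h+1-a$ components. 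The case $h=0$ consists of nested or disjoint regions, where neither difference is disconnected.
\end{enumerate}

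\textbf{Suggested fix.} State the general-position hypothesis explicitly. Under it, the existence of points of $\beta\setminus\alpha$ near $u$ and $v$ does become true. However, the rest of your route (the closed curve through the component $K$, the ``hole'' case, the symmetric reformulation) is still only sketched, and the crosscut count is considerably shorter than completing it.
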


\begin{proof}
Suppose first that $\partial\alpha$ and $\partial\beta$ intersect in $2h>0$
points. The $h$ arcs of $\partial\beta$ in the interior of $\alpha$ are
pairwise disjoint crosscuts of the Jordan disk $\alpha$, and hence partition
it into $h+1$ cells. If $a$ of these cells are contained in $\beta$, the
remaining $h+1-a$ cells are the components of
$\alpha\setminus\beta$. The $a$ cells contained in $\beta$ are precisely the
components of the common interior. Reversing the roles of $\alpha$ and
$\beta$ therefore gives the same numbers $h$ and $a$, so
$\beta\setminus\alpha$ also has $h+1-a$ components. Thus $\alpha\setminus\beta$ and
$\beta\setminus\alpha$ have the same number of components. If the boundaries do not intersect, the regions
are disjoint or one contains the other, and neither difference is
disconnected.
\end{proof}

The simple \emph{piercing graph} $\Pi(\Gamma)$ has one vertex for every
region in $\Gamma$ and an edge $\alpha\beta$ if the two regions pierce.

\begin{definition}[Piercing degree]
\label{defn:rpiercing}
The piercing degree of $\Gamma$ is
\[
\max_{\alpha\in\Gamma}
|\{\beta\in\Gamma\setminus\{\alpha\}:\beta\text{ pierces }\alpha\}|.
\]
We say that $\Gamma$ is an $r$-piercing family if this degree is at most $r$.
\end{definition}
Equivalently, by Lemma~\ref{lem:piercing-symmetry}, the piercing degree of
$\Gamma$ is the maximum degree of $\Pi(\Gamma)$.

Let $P$ be the finite set of input points. For $p\in P$, let
$\Gamma(p)=\{\alpha\in\Gamma:p\in\alpha\}$ be its \emph{trace}. For
$U\subseteq\Gamma$, write $\Tr_U(p)=\Gamma(p)\cap U$. In the dual set system
defined by $\Gamma$ and $P$, points with identical traces define the same
hyperedge, and we identify them.
We assume that no point of $P$ lies on a region boundary.

For the analysis of the algorithms we present later, we need to consider families with red and blue colour
classes $\R$ and $\B$, so
$U=(U\cap\R)\mathbin{\dot\cup}(U\cap\B)$. A trace is \emph{mixed} if it
contains a region of each colour.

\begin{definition}[Locality graph]
\label{def:locality}
A bipartite graph $L_U$ with vertex set $U$ and
$E(L_U)\subseteq(U\cap\R)\times(U\cap\B)$ is a
\emph{locality graph} for a specified collection of traces restricted to $U$
if every mixed trace contains an edge of $L_U$.
Among all locality graphs for the specified traces, one with the smallest
number of edges is called a \emph{minimum locality graph}.
\end{definition}

\begin{definition}[Shallow-cell complexity~\cite{chan2012weighted}]
The dual set system has shallow-cell complexity $\phi$ if, for every
$U\subseteq\Gamma$ with $|U|=m$ and every $k$, the number of distinct traces
$\Tr_U(p)$ of size at most $k$ is $O(m\phi(m)\poly(k))$.
\end{definition}

\noindent
In particular, if $\phi$ is a constant function, we say that the shallow-cell complexity is constant. 

For a graph $H$, a vertex separator $X$ is \emph{$2/3$-balanced} if every
component of $H-X$ has at most $2|V(H)|/3$ vertices. An edge bisection of a
graph is \emph{$2/3$-balanced} if it partitions the
vertex set into two parts, each containing at most two thirds of the
vertices; its size is the number of edges between the parts. The
odd-crossing bisection theorem of Pach and T{\'o}th~\cite{PachToth} states that a
graph $H$ drawn in the plane has a $2/3$-balanced edge bisection of size
\[
O\left(\log(2+|V(H)|)
\sqrt{\operatorname{odd-cr}(H)+\sum_{v\in V(H)}\deg_H(v)^2}\right).
\]
Here $\operatorname{odd-cr}(H)$ is the minimum, over all plane drawings of
$H$, of the number of unordered pairs of edges that cross an odd number of
times. If $q(D)$ is the number of independent such pairs in a particular
drawing $D$, then
\[
\operatorname{odd-cr}(H)\le q(D)+
\sum_{v\in V(H)}\binom{\deg_H(v)}2
\le q(D)+\frac12\sum_{v\in V(H)}\deg_H(v)^2.
\]

\section{Main Results}
\label{sec:contribution}
In the rest of the paper, $\Gamma$ is assumed to be an $r$-piercing family
and $n=|\Gamma|$. We assume that $\Gamma$ covers $P$ when considering Set
Cover. In the
\emph{Discrete Independent Set} problem, the objective is to select a largest
subset of $\Gamma$ such that each point of $P$ is contained in at most one
selected region. In the \emph{Set Cover} problem, the objective is to select a
smallest subset of regions covering every point of $P$.

The $t$-local search algorithm used in the following theorems is described in
Section~\ref{sec:alg}.

\begin{remark}[Combinatorial implementation]
The local-search algorithms use only the incidence relation between $P$ and
$\Gamma$; they do not require a geometric representation of the points or
regions. The geometry is used only in the analysis, through locality graphs
and their separators.
\end{remark}

\begin{restatable}{thm}{DIS}[Discrete Independent Set]
\label{ptas-independent-set}
For every $\epsilon\in(0,1]$, the $t$-local search algorithm, with $t$ a
sufficiently large constant multiple of
$((r+1)/\epsilon)^4\log^3(2+(r+1)/\epsilon)$, yields a
$(1-\epsilon)$-approximation for Discrete Independent Set. The algorithm runs
in time $|P|n^{O(t)}$.
\end{restatable}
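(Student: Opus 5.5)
We follow the standard local-search framework of Chan and Har-Peled and of Mustafa and Ray. Let $\R$ be the locally optimal solution returned by $t$-local search and $\B$ an optimal solution; without loss of generality $\R\cap\B=\emptyset$, since common regions can be removed without affecting either side or any trace. Put $U=\R\cup\B$. Since each family is an independent set, every point meets at most one red and one blue region, so every mixed trace has exactly one red and one blue region. Let $L_U$ be a minimum locality graph for the mixed traces (Definition~\ref{def:locality}). The key exchange property is the following: for any $\B'\subseteq\B$, the set $(\R\setminus N_{L_U}(\B'))\cup\B'$ is again a discrete independent set. Indeed, any point lying in both some $\beta\in\B'$ and some $\alpha\in\R$ has a mixed trace $\{\alpha,\beta\}$, which must be an edge of $L_U$. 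Hence local optimality gives $|N_{L_U}(\B')|\ge|\B'|$ whenever $|\B'|+|N_{L_U}(\B')|\le t$. The goal is then a recursive division of $L_U$ into pieces of size at most $t$ with boundary of total size at most $\epsilon|U|/2$. Summing the exchange inequality over the pieces, each applied to the interior blue vertices of a piece, gives $|\B|\le|\R|+O(\text{boundary})$, and therefore $|\R|\ge(1-\epsilon)|\B|$.

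The separator construction proceeds as follows. First, bounded piercing degree means $\Pi(U)$ has maximum degree at most $r$, so it can be properly coloured with $r+1$ colours. Each colour class is non-piercing, so the planar support theorem of Raman and Ray applies to each class and to each pair of classes. This yields that $L_U$, and every subgraph of it, has $O((r+1)^2)$ average degree. Second, we draw $L_U$ by placing each vertex inside its region and routing each edge $\alpha\beta$ through a witness point in $\alpha\cap\beta$, keeping each half-edge inside its own region. We then argue that two independent edges cross an odd number of times only if some region of one edge pierces a region of the other. This parity claim follows because curves confined to non-piercing pairs can be redrawn so that they cross evenly. Consequently $q(D)=O(r\sum_v\deg(v)^2)$.

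The high-degree vertices are handled with an $\epsilon$-net. Regions with large traces are removed by putting them into the separator, and this costs an $\epsilon'$-fraction of the vertices; bounded VC-dimension, again coming from the supports, gives the net. After this removal the degrees are bounded, and Pach--T\'oth yields a $2/3$-balanced edge bisection of size $O(\log m\cdot\sqrt{m}\cdot\poly(r))$. Taking one endpoint of each cut edge turns this into a vertex separator. The separator is recomputed on each restricted locality instance (the subinstances' minimum locality graphs) and applied recursively, in the style of Frederickson's $r$-divisions. With $t=\Theta(((r+1)/\epsilon)^4\log^3(2+(r+1)/\epsilon))$, the total boundary comes to $O(|U|\poly(r)\log^{3/2}t/\sqrt t)\le\epsilon|U|/2$.

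The main obstacle is the odd-crossing bound. Regions may intersect in many components, and a single piercing pair can force arbitrarily many crossings between edges incident to unrelated vertices. The charging must therefore go to pairs of edges that both lie inside a piercing pair, giving an $O(r)$ factor per vertex. It must also be checked that the recursion on restricted instances preserves the exchange property; this is why the separator is adaptive rather than fixed in advance.

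The running time is immediate. Each improvement increases $|\R|$, so there are at most $n$ rounds. Each round tries $n^{O(t)}$ swaps, and checking one swap takes $|P|\cdot O(t)$ time, for a total of $|P|n^{O(t)}$.
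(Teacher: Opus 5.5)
Your outer argument is the paper's. Exchanging $\B'$ against $N_{L_U}(\B')$ and summing over the groups of a recursive division is exactly how the theorem follows from Theorem~\ref{thm:locality-division}, and it is fine here. Since every mixed trace is a single red--blue pair, the locality graph is forced to be the red--blue conflict graph, and restricting to a bag just takes an induced subgraph.

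The gaps are in the separator, which carries all the difficulty. First, your parity claim is false for the drawing you describe. Suppose the point representing $\gamma$ lies inside $\alpha$. This cannot be avoided in general, since $\gamma$ may be covered by other regions of $U$. Then the parity of an arm inside $\alpha$ against an arm leaving that point depends on which side of the point the $\alpha$-arm passes, even though $\alpha$ and $\gamma$ do not pierce. Rerouting one arm to repair one pair changes its parity against every arm ending at that point. So ``redraw non-piercing pairs to cross evenly'' does not give a single global drawing, and your bound $q(D)=O(r\sum_v\deg(v)^2)$ is unjustified.

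The paper's closing-curve argument proves only that an odd pair needs witness containment, anchor containment, or piercing. It then \emph{counts} the first two kinds, obtaining $\Delta_0T$ and $2\Delta_0T$. The second bound uses the fact that each anchor sits next to the witness of a minimum-size essential trace, so that $\ap_\gamma$ lies only in the regions of that trace. With that placement, witness containment cannot occur in your setting, because every witness trace is exactly $\{\alpha,\beta\}$. Anchor containment by the neighbour whose witness hosts the anchor still can occur, however, and contributes $O(\Delta_0 m)$ odd pairs that your count omits.

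Second, your degree reduction fails. An $\epsilon$-net hitting large traces does nothing for degrees. Here every trace has size at most $2$, so the net is empty. Yet one region can conflict with $\Theta(s)$ others (a large disk meeting many small disjoint disks), and then the $\sum_v\deg(v)^2$ term makes the Pach--T\'oth bound trivial. Removing a fixed $\epsilon'$-fraction of each bag would not help either: that costs $\epsilon'|U|$ per level over $\Theta(\log(|U|/t))$ levels.

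The degree-reduction cost must be sublinear in the bag size. Put the $O((r+1)s/\Delta_0)$ vertices of degree above $\Delta_0$ into the separator, and balance against the Pach--T\'oth term, which is of order $(r+1)\log(2+s)\sqrt{\Delta_0 s}$. This gives separators of size $O((r+1)s^{2/3}\log^{2/3}(2+s))$, not $O(\sqrt{s}\log s)$. Your $\log^{3/2}t/\sqrt t$ estimate is therefore unsupported, although the stated $t$ still suffices.

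Two smaller points. Hereditary sparsity does not follow by applying the planar-support theorem to pairs of colour classes, since the union of two non-piercing classes may contain piercing pairs; the paper instead thins randomly to a non-piercing subfamily. Also, the two arms of one edge may cross each other, so a parity-preserving smoothing is needed before Pach--T\'oth applies.
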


\begin{restatable}{thm}{SC}[Set Cover]
\label{ptas-set-cover}
For every $\epsilon\in(0,1]$, the $t$-local search algorithm, with $t$ a
sufficiently large constant multiple of
$((r+1)/\epsilon)^4\log^3(2+(r+1)/\epsilon)$, yields a
$(1+\epsilon)$-approximation for Set Cover. The algorithm runs in time
$|P|n^{O(t)}$.
\end{restatable}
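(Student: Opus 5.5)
We prove the Set Cover statement (Theorem~\ref{ptas-set-cover}) with the standard local-search framework of Mustafa--Ray and Chan--Har-Peled. The algorithmic part is routine. A $t$-local search starts from any cover and repeatedly replaces at most $t$ regions by at most $t-1$ regions whenever the result still covers $P$. Each improvement lowers the size, so there are at most $n$ rounds. Each round enumerates $n^{O(t)}$ candidate swaps and checks feasibility in $O(|P|t)$ time, which gives $|P|n^{O(t)}$. All the work is in the analysis.

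Let $\R$ be the local solution and $\B$ an optimal solution. After removing common regions (which only helps), we may assume $\R\cap\B=\emptyset$, and set $U=\R\cup\B$. Every point $p$ has $\Tr_U(p)$ meeting both colours, so every trace is mixed. Take a minimum locality graph $L_U$ for these traces. The key exchange property is this: if $S\subseteq\B$ and $N_{L_U}(S)\cap\R$ is removed from $\R$ and replaced by $S$, the result is still a cover. To see this, take any point whose red regions all lie in $N(S)$. Its trace contains an edge $\rho\beta$ of $L_U$ with $\rho\in N(S)$. The locality property alone does not place $\beta$ in $S$, so we use the standard refinement: we add to the replacement set the blue endpoints of the edges certifying the points lost. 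In the Mustafa--Ray setting this is handled by requiring only that for every point, some red and some blue region of its trace are adjacent. The swap then removes a set $A\subseteq\R$ and adds $N(A)\cap\B$.

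Local optimality then gives $|N(A)\cap\B|\ge|A|$ for every $A\subseteq\R$ with $|A|+|N(A)|\le t$. The standard planar-separator counting argument (Frederickson-type $r$-division) gives $|\R|\le(1+\epsilon)|\B|$, provided $L_U$ admits a division into pieces of size at most $t$ with boundary of total size $\epsilon|U|/C$. We apply local optimality to the interior red vertices of each piece and sum.

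The structural core therefore has two parts.
\begin{enumerate}[(i)]
\item $L_U$, and every locality graph of a restricted instance, has linear edge density and bounded VC-dimension.
\item Every such graph on $m$ vertices has a $2/3$-balanced separator of size $O(((r+1)m)^{1-\delta}\,\mathrm{polylog})$, with exponent chosen so that recursion gives pieces of size $t=\tilde O(((r+1)/\epsilon)^4)$ and boundary $\epsilon|U|$.
\end{enumerate}

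For (i), the piercing graph $\Pi(U)$ has maximum degree $r$, so greedy colouring splits $U$ into $r+1$ non-piercing classes. For each pair of classes $i,j$, the Raman--Ray theorem~\cite{RR18} gives a planar support on the union that certifies the traces containing those two classes. By minimality, $L_U$ has $O((r+1)^2|U|)$ edges. This argument is hereditary.

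For (ii), we draw each edge $\rho\beta$ of $L_U$ inside $\rho\cup\beta$ through a witness point $p\in\rho\cap\beta$. We would like independent edges to cross evenly unless some region pierces another. Regions pierced by nothing in their class then contribute only boundedly many odd pairs per vertex (at most $r$ pierce partners, each weighted by degree). High-degree vertices, coming from deep traces, are removed first by an $\epsilon$-net of size $O(\cdot/\epsilon)$ using the bounded VC-dimension. After that, $\sum\deg^2$ and $\operatorname{odd-cr}$ are both $O_r(m)$ times a polynomial in the degree threshold. Pach--T\'oth then gives an edge bisection of size $\tilde O(\sqrt m)$, and taking one endpoint of each cut edge turns it into a vertex separator. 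Because the recursion recomputes a fresh locality graph on each restricted subinstance, the separator is applied adaptively, and the total overlap stays sublinear.

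The main obstacle is the parity argument: showing that two independent locality edges between non-piercing pairs can be drawn to cross an even number of times. I would first try to route each edge as a curve from a point deep in $\rho$ to a point deep in $\beta$ through their common cell. I would then argue via Jordan-curve parity that $\partial\rho'$ is crossed evenly, because $\rho'\setminus\rho$ is connected. Balancing the exponents to reach exactly $((r+1)/\epsilon)^4\log^3$ is secondary bookkeeping.
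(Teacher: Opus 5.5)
\paragraph{Where the proposal falls short.} Your outer layer is sound and matches the paper. The corrected swap (delete a set $A$ of local regions, insert $N(A)\cap\B$) is exactly the paper's exchange of $B\cap G_i$ for $R\cap(G_i\cup N_L(G_i))$. Summing over groups against a boundary of total size $O(\epsilon|U|)$ then finishes the proof. You should also discard the points covered by common regions before asserting that every trace is mixed. The genuine gap is in the separator, which is where the content of the theorem lies.

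\paragraph{The parity claim fails.} Your claim that independent locality edges cross evenly unless some region pierces another is not established, and the drawings you describe violate it. If $\rho'$ is nested in $\rho$, the vertex point of $\rho'$ lies inside $\rho$. An edge at $\rho$ can then be rerouted inside $\rho$ around that point, which flips its crossing parity with the edges at $\rho'$, although nothing pierces. Likewise, the witness $p_e\in\rho\cap\beta$ through which $e$ must pass may lie in a third region $\rho'$ while your starting point in $\rho$ does not. The curve then crosses $\partial\rho'$ an odd number of times, so your planned Jordan-curve argument breaks down. Moreover, points of $\rho$ lying in no other region need not exist at all.

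\paragraph{How the paper handles these configurations.} The paper does not exclude such configurations; it counts them. Lemma~\ref{lem:essential-separator} shows that an odd arm pair forces witness containment, anchor containment, or piercing. It controls the first two cases in two ways. It works with a \emph{minimum} locality graph, in which each edge $e$ has a distinct essential witness with $E(F[S_e])=\{e\}$. It also places the anchor of $\alpha$ next to the witness of the incident edge with the smallest trace. This gives at most $\Delta_0T$ and $2\Delta_0T$ such pairs, where $T=\sum_e|S_e|$.

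That is the actual role of the $\epsilon$-net. It removes a set $Z$ meeting every trace of size greater than $k$, and each such trace is certified by an edge at $Z$. Hence the remaining minimum locality graph has $T=O((r+1)ks)$. The net does not remove high-degree vertices; those go into the separator separately, and by sparsity there are $O((r+1)s/\Delta_0)$ of them. Balancing $k$ and $\Delta_0$ produces the $s^{3/4}$ separator, and hence $t\approx((r+1)/\epsilon)^4\log^3(2+(r+1)/\epsilon)$.

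\paragraph{The sparsity argument fails too.} A proper colouring of $\Pi(U)$ makes each class non-piercing, but the union of two classes is not, since every piercing pair joins two different classes. So Raman--Ray gives no planar support there. The reduction is also circular, since such a union is just another $r$-piercing family.

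The paper uses random thinning instead. It considers the regions in a fixed order, each with probability $\theta=1/(2(r+1))$, and retains a region only if no earlier retained region pierces it. Every non-piercing pair survives with probability at least $\theta^2/4$. The retained set has a planar support, hence a locality graph with at most $3|J|$ edges. The replacement property of minimum locality graphs turns this into $|E(F[J])|\le 3|J|$, so $F[W]$ has at most $24(r+1)|W|$ non-piercing edges.

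\paragraph{Smaller points.} Bounded VC-dimension needs its own argument. A largest colour class of a shattered set is non-piercing and shattered, which forces $K_{|J|}$ into a planar support. Also, because locality graphs are recomputed at every level of the recursion, there is no single $L_U$ to divide. The exchange needs one locality graph $L$ with every edge in some $G_i\cup X$ and $N_L(G_i)\subseteq C_i\cap X$. The paper assembles it by following each trace down the recursion to a leaf where it stays mixed (Theorem~\ref{thm:locality-division}).
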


\begin{restatable}{thm}{SCC}[Shallow traces]
\label{thm:scc}
Let $S\subseteq\Gamma$, let $k\ge1$, and let $\nu_{\le k}(S)$ be the number
of distinct traces $\Tr_S(p)$ of size at most $k$. Then
\[
\nu_{\le k}(S)\le
2+(r+2)^{O(r+1)}|S|(k+1)^{4r+3},
\]
where the constant hidden in the exponent is absolute.
Consequently, for every fixed $r$, the dual set system has constant
shallow-cell complexity.
\end{restatable}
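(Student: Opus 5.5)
The plan is to follow the route announced in the introduction: first show that two-element traces are linearly sparse, then convert this into a bound on all shallow traces using the signature theorem of Ackerman, Keszegh and P\'alv\"olgyi~\cite{AKP}.

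\textbf{Step 1: decompose into non-piercing classes.} The piercing graph $\Pi(S)$ has maximum degree at most $r$. Greedily colour it with $r+1$ colours, which partitions $S$ into $r+1$ non-piercing subfamilies $S_1,\dots,S_{r+1}$.

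\textbf{Step 2: linear sparsity of two-element traces.} For any subfamily $U$, the number of distinct traces of size exactly $2$ is $O((r+1)^2|U|)$. To see this, consider a pair of classes $S_i,S_j$; the union $S_i\cup S_j$ may itself contain piercing pairs. Make it two-coloured red/blue and consider a minimum locality graph for the traces of size $2$. Each size-$2$ trace $\{\alpha,\beta\}$ with $\alpha,\beta$ in the same non-piercing class is handled directly. The planar support theorem of~\cite{RR18} for a non-piercing family gives a planar graph in which every trace induces a connected subgraph, so every two-element trace is an edge of that planar graph. That contributes at most $3|S_i|$ pairs per class. For cross-class pairs I would reuse the locality-graph machinery the paper sets up (hereditary linear edge density of minimum locality graphs). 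Those bounds are presumably proved later and may be assumed to give $O_r(|U|)$ edges.

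\textbf{Step 3: injective signatures.} The AKP signature theorem says, roughly, that if every subfamily $U$ has only $O(|U|)$ distinct traces of size at most $2$ (hereditarily), then each trace of size at most $k$ is determined by a small signature. A signature here is a bounded-size subset of its elements together with a bounded amount of extra data. This leads to the count $|U|\cdot k^{O(1)}$.

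I expect the main difficulty to be in Step 3. The polynomial exponent $4r+3$ suggests the signature of a trace $T$ consists of one anchor region plus a few boundary-adjacent pieces for each of the at most $r$ piercers that interact with it. I would instead argue directly. Fix a trace $T=\Tr_S(p)$ with $|T|\le k$, and look at the sub-arrangement formed by $T$ together with the regions piercing members of $T$. The non-piercing structure within each class lets one charge $T$ to a two-element trace or a vertex of the planar support, with $(k+1)^{O(1)}$ choices per class, and the classes combine multiplicatively in $r$. Collisions, where distinct traces map to the same signature, are exactly what the conference version mishandled. So I would make the map injective explicitly: record, for each of the at most $r+1$ classes, the planar-support face containing $p$, specified by $O(1)$ regions among the $k$, plus the $\le 2r$ piercing regions bounding it. That gives at most $|S|(k+1)^{4r+3}$ choices up to the $(r+2)^{O(r+1)}$ factor. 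The additive $2$ accounts for the empty trace and one degenerate case.

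Finally, constant shallow-cell complexity for fixed $r$ follows immediately, since the bound is $O(|S|\,\poly(k))$.
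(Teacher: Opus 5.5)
\textbf{Verdict: there is a genuine gap in Step 3, which carries the whole argument; your ``direct'' encoding does not work.}

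The planar support of~\cite{RR18} is an abstract planar graph on the regions. Its embedding has nothing to do with where $p$ lies, so ``the planar-support face containing $p$'' is undefined. Specifying ``$O(1)$ regions among the $k$'' presupposes that $T$ is already known, so it cannot be used to encode $T$; naming a region from scratch costs a factor $|S|$.

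Most seriously, you give no way to describe the class-$i$ part $T\cap S_i$ with only $(k+1)^{O(1)}$ choices relative to a single anchor. Charging to a two-element trace or support vertex of class $i$ already has $\Theta(|S_i|)$ options. Recording per-class data independently and multiplying therefore gives a count of order $\prod_i|S_i|\cdot k^{O(r)}$, which is polynomial of degree up to $r+1$ in $|S|$, not linear. The bound $|S|(k+1)^{4r+3}$ and the injectivity of your map are asserted rather than derived. Injectivity is exactly the point the paper flags as broken in the conference version.

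Step 2 has a smaller problem: your class-pair approach does not handle cross-class pairs, since $S_i\cup S_j$ may contain piercing pairs and~\cite{RR18} does not apply. This is exactly Lemma~\ref{lem:density}(i), proved earlier by random thinning to a pairwise non-piercing subfamily with the bound $25(r+1)|S|$, so you can simply cite it.

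\textbf{What is missing} is the VC-dimension bound together with a global Clarkson--Shor sampling step; no per-class geometry is needed. The exponent is $4r+3=d-1$, where $d=4(r+1)$ comes from Lemma~\ref{lem:vc}, which you never use. Hereditary linearity alone only forces VC-dimension at most $2\lambda+1$, which would give a much larger exponent.

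The paper's route is as follows.
\begin{enumerate}[(1)]
\item Take $\lambda=25(r+1)$ and $d=4(r+1)$. Lemma~\ref{lem:signatures} (from~\cite{AKP}) gives injective signatures $\sigma(E)\subseteq E$ with $|\sigma(E)|\le d$. It also bounds the number of distinct nonempty hyperedges of size at most $d$ in any restriction to $W$ by $b_{\lambda,d}|W|$.
\item Sample each member of $S$ independently with probability $\rho=1/(k+1)$ to obtain $Y$. Say a trace $E$ with $|E|\le k$ survives if $E\cap Y=\sigma(E)$. This happens with probability at least $\rho^d(1-\rho)^k\ge\mathrm{e}^{-1}(k+1)^{-d}$.
\item By injectivity, surviving traces remain distinct on $Y$, and each has size at most $d$. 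Apart from the empty trace and the one trace with empty signature, there are therefore at most $b_{\lambda,d}|Y|$ of them.
\item Taking expectations with $\mathbb{E}|Y|=|S|/(k+1)$ gives $\nu_{\le k}(S)-2\le\mathrm{e}\,b_{\lambda,d}|S|(k+1)^{d-1}$, and $\mathrm{e}\,b_{\lambda,d}=(r+2)^{O(r+1)}$.
\end{enumerate}
It is this sampling step that keeps the dependence on $|S|$ linear.
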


\begin{restatable}{thm}{WSC}[Weighted problems]
\label{thm:WSC}
Let $w:\Gamma\to\mathbb{N}$ be a weight function. For every fixed $r$:
\begin{enumerate}[(i)]
\item weighted Set Cover admits a deterministic polynomial-time
      $C_r$-approximation, where $C_r$ depends only on $r$; and
\item weighted Discrete Independent Set admits a deterministic polynomial-time
      $O(r+1)$-approximation.
\end{enumerate}
\end{restatable}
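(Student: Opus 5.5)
Part (i) will follow from Theorem~\ref{thm:scc} combined with the quasi-uniform sampling framework of Chan et al.~\cite{chan2012weighted}. I will use their derandomized version. It states that if, for every subfamily $U$ of size $m$, the dual set system has at most $O(m\,\phi(m)\,\poly(k))$ traces of size at most $k$, then weighted Set Cover admits a deterministic polynomial-time $O(\log\phi(n))$-approximation. For constant $\phi$ this is a constant factor.

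Theorem~\ref{thm:scc} gives the bound $2+(r+2)^{O(r+1)}m(k+1)^{4r+3}$. For $m\ge1$ this is at most $3(r+2)^{O(r+1)}m(k+1)^{4r+3}$, so for fixed $r$ we may take $\phi\equiv c_r$ and $\poly(k)=(k+1)^{4r+3}$. I will check that the degree of the polynomial in $k$ only affects the constants in their analysis, not the form of the conclusion. I will also check that their theorem quantifies over all subfamilies of $\Gamma$, which is why Theorem~\ref{thm:scc} is stated for arbitrary $S\subseteq\Gamma$. The resulting factor $C_r$ depends only on $r$, consistent with the disclaimer that no uniform polynomial dependence is claimed. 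The algorithm solves the standard covering LP and rounds it deterministically, so it runs in polynomial time.

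For part (ii) I will use LP rounding. Solve the packing LP: maximise $\sum_\alpha w(\alpha)x_\alpha$ subject to $\sum_{\alpha\ni p}x_\alpha\le1$ for every $p\in P$ and $0\le x\le1$. Then build a conflict graph on $\Gamma$, joining two regions when they share a point of $P$.

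The key structural step is to orient each conflict edge and show that the LP mass on in-neighbours is bounded. For each $\alpha$, I want to show
\[
\sum_{\beta\to\alpha}x_\beta\le O(r+1).
\]
I would split conflicts into piercing and non-piercing pairs.
\begin{itemize}
\item Piercing pairs contribute at most $r$, since $\alpha$ has at most $r$ piercing neighbours and each $x_\beta\le1$.
\item For non-piercing pairs, if $\beta\subseteq\alpha$ or $\alpha\subseteq\beta$, then the two share a point of $P$, and I orient the edge so that the larger region points to the smaller.
\item For non-piercing, non-nested pairs whose boundaries cross, I would use the standard non-piercing argument. Order the regions so that each region's later non-piercing neighbours can be hit by $O(1)$ points of $P$ inside $\alpha$. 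Each such point contributes LP load at most $1$.
\end{itemize}

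The main obstacle is this non-piercing, non-nested case. What I would try first is to partition $\Gamma$ into $r+1$ non-piercing subfamilies by greedily colouring $\Pi(\Gamma)$, which has maximum degree $r$. Within each subfamily I would invoke the known $O(1)$ inductive-independence bound for non-piercing regions from the planar-support results of~\cite{RR18}. If that bound is not directly available, I would fall back on the two-element-trace sparsity used for Theorem~\ref{thm:scc}: the graph of pairs sharing a point has linear density, which gives a degeneracy ordering with small back-degree.

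Given inductivity $\rho=O(r+1)$, I will finish with the standard contention-resolution rounding. Process the regions in reverse ordering, and keep $\alpha$ with probability $x_\alpha/(2\rho)$ if no conflicting region has already been kept. This gives expected weight $\Omega(\mathrm{LP}/\rho)$. I will derandomize by the method of conditional expectations, or alternatively use the deterministic local-ratio variant, to obtain a deterministic $O(r+1)$-approximation.
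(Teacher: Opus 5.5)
\textbf{Part (i)} is the paper's argument: Theorem~\ref{thm:scc} gives constant shallow-cell complexity for fixed $r$, and the derandomized quasi-uniform sampling of Chan et al.~\cite{chan2012weighted} finishes. Your final contention-resolution step in (ii) also matches the paper.

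\textbf{The gap is the structural step in (ii).} You never obtain a single linear order in which every region's already-processed conflict neighbours have $x$-mass $O(r+1)$.

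Your primary plan colours $\Pi(\Gamma)$ into $r+1$ non-piercing classes and invokes an $O(1)$ inductive-independence bound inside each class. Even granting such a bound, it controls only conflicts within a class. RR18 does not state such a bound in this form, and your version (``later neighbours are hit by $O(1)$ points of $P$'') is stronger than needed and unjustified in the discrete setting. A conflicting, non-piercing pair from two different classes is covered neither by the ``at most $r$ piercing neighbours'' count nor by any per-class bound, because the union of two classes is no longer non-piercing. The per-class orders, together with your containment orientation, would also still have to be merged into one acyclic order.

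Your fallback is false as stated: the conflict graph does not have linear density. Take $n$ disks containing a common point $p$, with $P=\{p\}$. The family is non-piercing ($r=0$), yet its conflict graph is $K_n$, so no ordering has small back-degree. Linear density holds for $D(S)$, the graph of two-element traces, not for the graph of pairs sharing a point.

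\textbf{The missing idea} is to transfer the sparsity of two-element traces to the conflict graph $H$ \emph{fractionally}, using the LP constraints.
\begin{enumerate}[(a)]
\item For $U\subseteq\Gamma$, sample each $\alpha\in U$ independently with probability $x_\alpha/2$ to obtain $Y$.
\item For a conflict edge $\alpha\beta$ witnessed by $p$, the constraint $\sum_{\gamma\in\Gamma(p)}x_\gamma\le1$ implies that, with probability at least $x_\alpha x_\beta/8$, exactly $\alpha$ and $\beta$ survive from $\Gamma(p)\cap U$. On that event $\alpha\beta\in E(D(Y))$.
\item Comparing expectations with Lemma~\ref{lem:density}(i) gives $\sum_{\alpha\beta\in E(H[U])}x_\alpha x_\beta\le c_1(r+1)\sum_{\alpha\in U}x_\alpha$ for an absolute constant $c_1$; this is Lemma~\ref{lem:fractional-density}.
\item Because this holds for every $U$, averaging gives, in every induced subgraph, a vertex whose neighbour $x$-mass is $O(r+1)$.
\item Repeatedly removing such vertices yields the elimination order you need.
\end{enumerate}
Piercing pairs are already absorbed in Lemma~\ref{lem:density}, so no piercing/nested/crossing case split is required.

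\textbf{A minor point on derandomization.} Accept $\alpha$ when no already-processed neighbour was \emph{proposed}, rather than \emph{kept}. Then the accepted weight is an explicit multilinear polynomial in independent proposal indicators, so its conditional expectations are computable. With the ``kept'' rule this is not evident.
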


\begin{remark}[Derandomization]
The quasi-uniform-sampling algorithm used for weighted Set Cover is
derandomized by Chan et al.~\cite{chan2012weighted}. For weighted Discrete
Independent Set, the rounding argument below can be derandomized by the
method of conditional expectations.
\end{remark}

\section{Locality Separators}
\label{sec:locality}

\subsection{Sparse locality graphs and bounded VC-dimension}

We use the planar-support theorem of Raman and Ray~\cite{RR18}: the dual set
system of a non-piercing family has a simple planar support in which
every nonempty trace induces a connected subgraph. In particular, every
two-element trace is an edge of the support.

For $S\subseteq\Gamma$, let $D(S)$ be the graph on $S$ in which
$\alpha\beta$ is an edge iff $\Tr_S(p)=\{\alpha,\beta\}$ for some $p\in P$.

\begin{lemma}[Hereditary density]
\label{lem:density}
The following statements hold with the absolute constant $c=25$.
\begin{enumerate}[(i)]
\item Every $S\subseteq\Gamma$ satisfies
      $|E(D(S))|\le c(r+1)|S|$.
\item Let $F$ be a minimum locality graph for any subcollection of traces on
      a two-coloured family $V\subseteq\Gamma$. Every $W\subseteq V$
      satisfies $|E(F[W])|\le c(r+1)|W|$.
\end{enumerate}
\end{lemma}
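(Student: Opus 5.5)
Both parts follow the same plan. We split $W$ (or $S$) into a bounded number of non-piercing classes, apply the Raman--Ray planar support theorem to each pair of classes, and then sum the resulting planar edge bounds.

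First, the piercing graph $\Pi(S)$ has maximum degree at most $r$ (Definition~\ref{defn:rpiercing} with Lemma~\ref{lem:piercing-symmetry}). A greedy colouring therefore partitions $S$ into at most $r+1$ classes $S_1,\dots,S_{r+1}$, and each class is non-piercing in the sense of Definition~\ref{defn:nonpiercing}.

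\emph{Part (i).} Fix an edge $\alpha\beta$ of $D(S)$, witnessed by a point $p$ with $\Tr_S(p)=\{\alpha,\beta\}$. Suppose $\alpha\in S_i$ and $\beta\in S_j$, and put $S_{ij}=S_i\cup S_j$. Then $\Tr_{S_{ij}}(p)=\{\alpha,\beta\}$ as well.

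If $i=j$, $S_i$ is non-piercing, so by~\cite{RR18} its dual system has a planar support. In that support the two-element trace $\{\alpha,\beta\}$ is an edge. Hence edges inside $S_i$ number at most $3|S_i|$.

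If $i\ne j$, the union $S_{ij}$ is not itself non-piercing. I would instead use the red--blue version of the Raman--Ray theorem: for two non-piercing families, there is a planar bipartite support for the traces that are mixed. The trace $\{\alpha,\beta\}$ is mixed, so it is an edge of that support, giving at most $2(|S_i|+|S_j|)$ such edges.

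Summing over all pairs gives
\[
\sum_i 3|S_i|+\sum_{i<j}2(|S_i|+|S_j|)\le 3|S|+2r|S|,
\]
which is well within $c(r+1)|S|$.

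This invocation of the bipartite support for two \emph{different} non-piercing classes is the main obstacle. The paper only quotes the single-family statement, and $S_i\cup S_j$ may contain piercing pairs. If the red--blue statement is not available, I would fall back to a planar-graph argument on the arrangement of $S_i\cup S_j$. I expect this fallback to be where the constant $25$ (rather than something like $5$) comes from.

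\emph{Part (ii).} Let $F$ be a minimum locality graph. By minimality, each edge $e=\alpha\beta$ of $F$ has a private trace $\tau_e$: a specified mixed trace that contains $e$ and no other edge of $F$. Otherwise $e$ could be deleted and $F$ would still be a locality graph.

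Now fix $W\subseteq V$ and colour $W$ into the classes $W_1,\dots,W_{r+1}$ as above. Take an edge $e\in F[W]$ with $\alpha\in W_i$ and $\beta\in W_j$. We want to charge $e$ to a planar support edge, which would bound $|E(F[W])|$ by the same sum as in part (i). Simply restricting $\tau_e$ to $W_i\cup W_j$ does not work, because the restricted trace may have more than two elements.

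Instead I would argue as follows. Fix a planar support $Q_{ij}$ on $W_i\cup W_j$ for the traces restricted to $W_i\cup W_j$; each such trace is mixed or monochromatic according to the colouring. The restriction of $\tau_e$ induces a connected subgraph of $Q_{ij}$, so it contains an edge of $Q_{ij}$, and that edge lies inside $\tau_e$.

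The key step is to modify $F$. Replace $F[W_i\cup W_j]$ by the red--blue edges of $Q_{ij}$ and keep every edge of $F$ outside $W_i\cup W_j$. This yields another locality graph. Any specified trace that was covered only through edges inside $W_i\cup W_j$ is mixed within $W_i\cup W_j$, so it induces a connected subgraph of the bipartite support and therefore contains one of its red--blue edges.

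By minimality of $F$,
\[
|E(F[W_i\cup W_j])|\le |E(Q_{ij})|=O(|W_i|+|W_j|).
\]
Summing over pairs $i,j$ gives $|E(F[W])|\le c(r+1)|W|$. Part (ii) relies on the same bipartite planar support as part (i), so it shares the same main obstacle.
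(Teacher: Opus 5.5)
\textbf{There is a genuine gap, at exactly the step you flagged, and the fallback does not close it.}

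The red--blue statement you invoke is false as stated. It says that if $S_i$ and $S_j$ are each non-piercing, then the mixed point traces of $S_i\cup S_j$ have a planar bipartite support. Take $n$ pairwise disjoint horizontal strips as $S_i$ and $n$ pairwise disjoint vertical strips as $S_j$. Each family is non-piercing. Put a point in every crossing cell: this gives $n^2$ two-element mixed traces, so every support or locality graph must contain $K_{n,n}$. The planar-support theorem quoted in the paper applies only to a single non-piercing family, and $S_i\cup S_j$ is not one.

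Inside an $r$-piercing family the cross-piercing degree is at most $r$, so the grid is excluded. But then showing that $D(S_i\cup S_j)$, or a minimum locality graph on $W_i\cup W_j$, has $O((r+1)(|S_i|+|S_j|))$ edges is just the original lemma for families whose piercing graph is bipartite. Nothing has been reduced. Your unspecified ``planar-graph argument on the arrangement'' would have to carry the entire proof. Part (ii) fails for the same reason, since $Q_{ij}$ need not exist.

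\textbf{The missing idea} is never to place a non-piercing pair inside a family that is not non-piercing. Instead, pass to a random non-piercing subfamily that contains any fixed non-piercing pair with probability $\Omega((r+1)^{-2})$. The paper does this by random thinning:
\begin{enumerate}[(1)]
\item Process $W$ in a fixed order. Consider each vertex independently with probability $\theta=1/(2(r+1))$, and retain it only if no previously retained vertex pierces it.
\item The retained set $J$ is non-piercing, and $\mathbb{E}|J|\le\theta|W|$.
\item A fixed non-piercing pair survives with probability at least $\theta^2(1-\theta)^{2r}\ge\theta^2/4$.
\item The single-family planar support gives $|E(D(J))|\le 3|J|$. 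Comparing expectations bounds the non-piercing edges of $D(W)$ by $12|W|/\theta=24(r+1)|W|$.
\item Piercing edges number at most $r|W|/2$. This is where $c=25$ comes from, not from an arrangement argument.
\end{enumerate}

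For (ii), your replacement argument is correct and is exactly the one the paper uses. Apply it to the non-piercing set $J$ rather than to $W_i\cup W_j$. The bichromatic edges of the planar support on $J$ form a locality graph for the traces restricted to $J$, so $|E(F[J])|\le 3|J|$. The same expectation comparison then bounds the non-piercing edges of $F[W]$.
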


\begin{proof}
Fix an arbitrary order on a vertex set $W$. Process the vertices in this order,
and independently consider each vertex with probability
$\theta=1/(2(r+1))$; retain it if it has no previously retained piercing
neighbour. The retained set $J$ is pairwise non-piercing. This is the
standard random-thinning argument of Clarkson and Shor~\cite{clarkson1989applications}.
A non-piercing pair in $W$ is retained with probability at least
$\theta^2(1-\theta)^{2r}\ge \theta^2/4$, while
$\mathbb{E}|J|\le \theta|W|$.

For (i), a non-piercing edge of $D(W)$ whose endpoints survive remains a
two-element trace in $D(J)$. Since $D(J)$ is contained in a planar support,
it has at most $3|J|$ edges. Comparing expectations bounds the number of
non-piercing edges of $D(W)$ by $12|W|/\theta=24(r+1)|W|$. The piercing
edges number at most $r|W|/2$, proving (i) with $c=25$.

For (ii), minimum cardinality gives the following replacement property: the
edges of $F[W]$ may be replaced by any locality graph for the traces restricted
to $W$. Indeed, an original trace either keeps an $F$-edge with an endpoint
outside $W$, or all its certifying edges lie in $W$; in the latter case its
restriction to $W$ is mixed and the replacement certifies it. If $J$ is the
retained set above, the planar support on $J$, with only its bichromatic edges
kept, is a locality graph and has at most $3|J|$ edges. Thus
$|E(F[J])|\le3|J|$. The same expectation calculation bounds the non-piercing
edges of $F[W]$, and the piercing edges again number at most $r|W|/2$.
\end{proof}

\begin{lemma}[VC-dimension]
\label{lem:vc}
The dual set system, and every restriction of it, has VC-dimension at most
$4(r+1)$.
\end{lemma}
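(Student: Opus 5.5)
Suppose $S\subseteq\Gamma$ is shattered by the points of $P$, with $|S|=d$; we want to show $d\le 4(r+1)$. The dual set system here has ground set $\Gamma$ and one hyperedge $\Gamma(p)$ per point $p\in P$. A restriction only passes to a subfamily, and subfamilies are still $r$-piercing, so it suffices to treat $\Gamma$ itself. Shattering means that for every $T\subseteq S$ some point $p$ satisfies $\Tr_S(p)=T$.

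The plan is to extract a large pairwise non-piercing subfamily of $S$ and then use planarity. The piercing graph $\Pi(S)$ has maximum degree at most $r$, so greedily (or via an $(r+1)$-colouring) we find an independent set $J\subseteq S$ with $|J|\ge |S|/(r+1)$. Then $J$ is non-piercing, and $J$ is still shattered: for $T\subseteq J$, the point realizing $T$ on $S$ also realizes $T$ on $J$.

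Next we bound the VC-dimension of a shattered non-piercing family by a constant. Since $J$ is shattered, every pair $\{\alpha,\beta\}\subseteq J$ occurs as a trace $\Tr_J(p)$. So the two-element-trace graph $D(J)$, as defined before Lemma~\ref{lem:density}, is the complete graph on $J$. By the planar support theorem of Raman and Ray, every two-element trace of $J$ is an edge of a simple planar support on $J$. Hence $K_{|J|}$ is planar, which gives $|J|\le 4$.

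The two steps combine to give $d\le(r+1)|J|\le 4(r+1)$. The only step needing care is the planar support argument. We must check that the support is built for the dual system restricted to $J$, using the points of $P$ with traces taken relative to $J$, so that every pair in $J$ really becomes a support edge. This is exactly the statement quoted from~\cite{RR18} at the start of the subsection, applied to the non-piercing family $J$. One can also bypass that check: Lemma~\ref{lem:density}(i) gives $|E(D(J))|=\binom{|J|}2\le 3|J|$ only for $|J|\le 7$, which is weaker, so the direct $K_5$ non-planarity argument is the one to use.
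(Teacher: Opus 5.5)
Your proof is correct and follows the paper's argument: take a largest colour class $J$ of an $(r+1)$-colouring of the piercing graph, which is non-piercing and still shattered, so every pair in $J$ is a two-element trace and the planar support contains $K_{|J|}$, giving $|J|\le 4$ and $|Q|\le 4(r+1)$. Your closing aside about the hereditary-density lemma is unnecessary, and it does not bypass the support theorem, since that lemma is itself proved via the planar support.
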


\begin{proof}
Suppose that $Q\subseteq\Gamma$ is shattered. The piercing graph on $Q$ has
maximum degree at most $r$, and hence has a proper colouring with at most
$r+1$ colours. Let $J$ be a largest colour class. The regions in $J$ are
pairwise non-piercing, $|J|\ge |Q|/(r+1)$, and $J$ remains shattered. Hence
every pair of vertices of $J$ is a two-element trace. Every planar support
must therefore contain $K_{|J|}$, so $|J|\le4$ and
$|Q|\le4(r+1)$.
\end{proof}

\subsection{A separator from essential traces}

Let $F$ be a minimum locality graph on $U$ for traces restricted to $U$.
Every edge $e\in E(F)$ has an
\emph{essential witness} $p_e$: if $S_e=\Tr_U(p_e)$, then $S_e$ contains the
endpoints of $e$ and $E(F[S_e])=\{e\}$. Otherwise $e$ could be deleted.
Witnesses for different edges are necessarily distinct: if $p_e=p_f$, then
$S_e=S_f$, and $E(F[S_e])=\{e\}=\{f\}$, so $e=f$. Let
\[
T(F)=\sum_{e\in E(F)}|S_e|.
\]

We construct a drawing of the graph where edges may cross multiple times.
The following standard smoothing operation removes self-intersections while
preserving the relevant parities; see also
Pelsmajer, Schaefer, and \v{S}tefankovi\v{c}~\cite{PelsmajerSchaeferStefankovic2007}.

\begin{lemma}[Parity-preserving smoothing]
\label{lem:smoothing}
Suppose a finite graph is drawn by generic immersed intervals: every edge has
finitely many transverse self-intersections and pairwise intersections, no
edge passes through a nonincident vertex, and there are no triple
intersections. The edge curves can be changed into Jordan arcs without
changing the crossing parity of any two distinct edges.
\end{lemma}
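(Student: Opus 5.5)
The plan is to remove self-intersections one edge at a time by a local surgery at each self-crossing, and to check that the surgery changes the number of crossings between any two distinct edges by an even amount. Fix an edge $e$ drawn as a generic immersed interval $\gamma:[0,1]\to\mathbb{R}^2$ with a transverse self-intersection, so $\gamma(s)=\gamma(s')$ for some $s<s'$. The sub-path $\gamma|_{[s,s']}$ is a closed loop based at the crossing point. Removing this loop and smoothing the corner at the crossing yields a new immersed interval $\gamma'$ with the same endpoints. Genericity is preserved after an arbitrarily small perturbation near the old crossing point, since that point is not on any other edge. The new curve has strictly fewer self-intersections: every self-intersection of $\gamma'$ was already a self-intersection of $\gamma$ involving parameters outside $(s,s')$, and the resolved crossing is gone. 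Iterating finitely many times makes $e$ a Jordan arc, and we then treat the remaining edges in turn.

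The key parity step concerns another edge $f$. Its crossings with $e$ change by exactly the number of crossings of $f$ with the deleted loop $\lambda=\gamma|_{[s,s']}$. So it suffices to show that $|f\cap\lambda|$ is even. Here $f$ is an arc whose endpoints are vertices, and I would first use that the loop contains no vertex. The interior of $\gamma$ meets no vertex: by hypothesis no edge passes through a nonincident vertex, and its own endpoints occur only at parameters $0$ and $1$. Hence both endpoints of $f$ lie off the closed curve $\lambda$.

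\textbf{The main obstacle} is that $\lambda$ need not be a simple closed curve, since it may itself have self-intersections, so the Jordan curve theorem cannot be applied directly. I would handle this with the mod-$2$ intersection number: for a closed curve $\lambda$ and a generic path $f$ whose endpoints avoid $\lambda$, the number $|f\cap\lambda| \bmod 2$ equals the mod-$2$ winding difference of $\lambda$ between the two endpoints of $f$. That is, it counts the parity of crossings of any generic path between those points, and homotopy invariance shows this is independent of the path. One endpoint of $f$ could lie in an odd-winding region of $\lambda$, and this is the real issue.

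To avoid it I would choose the order of surgery carefully: always remove an \emph{innermost} loop, namely a pair $s<s'$ such that $\gamma|_{[s,s']}$ has no further self-crossings, so $\lambda$ is a simple closed curve. Such a pair always exists: take the self-crossing pair minimizing $s'-s$. Now $\lambda$ bounds a disk $\Delta$, but a vertex may lie inside $\Delta$, for example the far endpoint of $f$, so parity could still flip. The fix is to not delete the loop but to \emph{slide} the corner. At the crossing point $x$, perform the standard crossing-to-smoothing swap, replacing the transverse self-crossing by the other local resolution. This is the orientation-consistent smoothing, which reconnects the curve as an immersed interval plus a closed component. I would reject that option and instead use the alternative: reroute $\gamma$ near $x$ so that it follows $\lambda$ in the reverse direction, tracing a parallel copy of $\lambda$ just outside it. This converts the self-crossing at $x$ into crossings with a doubled copy of $\lambda$. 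Every crossing of another edge $f$ with $\lambda$ becomes two crossings, so all parities with other edges are preserved, and the new self-crossings created are removable by induction on a suitable complexity measure.

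If this rerouting measure proves messy, I would instead appeal directly to the redrawing lemma of Pelsmajer, Schaefer, and \v{S}tefankovi\v{c}~\cite{PelsmajerSchaeferStefankovic2007}, which asserts exactly this parity-preserving removal of self-crossings.
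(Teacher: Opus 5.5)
You have the right instinct in ``follows $\lambda$ in the reverse direction'', but the reroute step, which carries your whole proof, fails as written. You describe it as producing a doubled copy of $\lambda$, with each crossing of $f$ with $\lambda$ becoming two crossings. If the new curve traverses $\lambda$ and then a reversed parallel copy, the number of crossings with $f$ goes from $|f\cap\text{rest}|+|f\cap\lambda|$ to $|f\cap\text{rest}|+2|f\cap\lambda|$. That is a change of $|f\cap\lambda|$, which is exactly the quantity you showed can be odd. So $\lambda\lambda^{-1}$ is parity-equivalent to deleting the loop, the operation you had just rejected. The parallel copy also duplicates every crossing of $\lambda$ with the prefix and suffix of $e$, so the number of self-crossings can grow. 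The ``suitable complexity measure'' that would guarantee termination is never specified.

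The missing observation is that a transverse self-crossing has two non-crossing resolutions, and you only examined the orientation-consistent one. The other resolution joins the prefix branch to the branch entering at the second visit, and the branch leaving at the first visit to the suffix branch. The result is a single interval that traverses $\lambda$ exactly once, in reverse; this is the paper's proof. The surgery takes place inside a small disk around $x$ that meets no other edge, vertex or self-crossing. Every portion of the curve outside that disk is still traversed exactly once. Hence all intersections with other edges are unchanged as sets, not merely in parity, and the number of self-crossings drops by exactly one. Iterating finishes the proof, with no innermost-loop choice, winding-number argument or auxiliary measure needed. Citing Pelsmajer, Schaefer and \v{S}tefankovi\v{c} is a legitimate reference, but it does not repair the argument you actually sketched.
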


\begin{proof}
Let $z$ be a self-crossing of one edge, visited at parameters $a<b$. Choose a
disk about $z$ meeting only the two local branches and no other edge, vertex,
or self-crossing. Of the two noncrossing smoothings, choose the connected one:
join the prefix branch to the branch entering at the second visit, and join
the branch leaving at the first visit to the suffix branch. The resulting
endpoint-to-endpoint interval traverses the middle subcurve between the visits
in reverse. Every portion outside the disk is retained exactly once. The
chosen self-crossing disappears, no new one is created, and all intersections
with other edges remain unchanged. Iterating proves the lemma.
\end{proof}

\begin{lemma}[Essential-witness separator]
\label{lem:essential-separator}
Let $F$ be a minimum locality graph on $U$, where $|U|=s$, and choose one essential
trace per edge, with total incidence $T=T(F)$. Then $F$ has a
$2/3$-balanced vertex separator of size
$O\bigl(((r+1)s)^{1/3}((r+1)^2s+T)^{1/3}
\log^{2/3}(2+s)\bigr)$.
In particular, if every essential trace has size at most $k\ge1$, the separator has size
$O((r+1)s^{2/3}k^{1/3}\log^{2/3}(2+s))$.
\end{lemma}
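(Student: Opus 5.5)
The plan is to reduce the vertex separator to the odd-crossing bisection theorem of Pach and T\'oth, applied to $F$ after deleting its high-degree vertices. Fix a threshold $\Delta$, to be chosen at the end. By Lemma~\ref{lem:density}(ii) we have $|E(F)|\le c(r+1)s$. Hence the set $X_0$ of vertices of degree greater than $\Delta$ has $|X_0|\le 2c(r+1)s/\Delta$. Let $H=F-X_0$. All degrees in $H$ are at most $\Delta$, so $\sum_v\deg_H(v)^2\le 2\Delta|E(F)|=O(\Delta(r+1)s)$.

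The core step is to draw $H$ so that the number $q(D)$ of independent oddly-crossing pairs is $O(\Delta((r+1)^2s+T))$. For each vertex $\alpha$ I choose an anchor point $a_\alpha$ in the interior of $\alpha$. An edge $e=\alpha\beta$ is drawn as the concatenation of a curve inside $\alpha$ from $a_\alpha$ to the essential witness $p_e$ and a curve inside $\beta$ from $p_e$ to $a_\beta$. Self-intersections are then removed by Lemma~\ref{lem:smoothing}, which keeps all parities, so $q(D)$ is unchanged. Now take two independent edges $e,f$ and a pair of half-edges, one lying in $\alpha$ and one lying in $\gamma$.
\begin{itemize}
\item If $\alpha,\gamma$ pierce, I charge the pair crudely. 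There are at most $r$ such $\gamma$ per $\alpha$, with at most $\Delta^2$ edge pairs each, but I would instead charge it as $\deg(\alpha)\deg(\gamma)\le\Delta(\deg\alpha+\deg\gamma)$. Summed over piercing pairs this gives $O(\Delta r(r+1)s)$.
\item If $\alpha,\gamma$ do not pierce, I want a parity lemma. Because $\alpha\setminus\gamma$ and $\gamma\setminus\alpha$ are connected, the two arcs can be routed, inside their regions, so that their crossing parity equals a count of endpoint memberships. Concretely, the parity should be odd only if an endpoint of one arc ($a_\gamma$ or $p_f$) lies in $\alpha$, or vice versa.
\end{itemize}

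Pairs with a witness membership $p_f\in\alpha$ mean $\alpha\in S_f$. The number of pairs $(f,e)$ with $e$ incident to some $\alpha\in S_f$ is at most $\sum_f\sum_{\alpha\in S_f}\deg(\alpha)\le\Delta T$. Pairs with anchor memberships $a_\gamma\in\alpha$ are handled similarly. I would choose anchors so that each anchor lies in few other regions of $U$, charging to traces of the anchors, or else place $a_\gamma$ at an endpoint witness of an incident edge, which reduces this case to the witness count. The main obstacle is making this parity lemma rigorous for arbitrarily wild boundaries. What I would try first is to fix one routing of each arc consistently for all partners, for instance a homotopy-canonical path in the simply connected region. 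Then I would argue via the Jordan curve theorem that the parity of crossings of an arc in $\alpha$ with an arc in $\gamma$ is determined by which endpoints lie in the other region. Non-piercing is exactly what makes the closed curve formed with the boundary separate consistently.

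With $q(D)=O(\Delta((r+1)^2s+T))$, the bound on odd-cr in the preliminaries gives $\operatorname{odd-cr}(H)+\sum\deg^2=O(\Delta X)$, where $X=(r+1)^2s+T$. Pach--T\'oth then yields a $2/3$-balanced edge bisection of $H$ of size $O(\log(2+s)\sqrt{\Delta X})$. Taking one endpoint of each cut edge, together with $X_0$, gives a vertex separator of $F$. To ensure the components are balanced with respect to $|V(F)|$ rather than $|V(H)|$, I would iterate once more on the larger side, or absorb $X_0$ by noting $|V(H)|\le s$. The separator has size $O((r+1)s/\Delta+\log(2+s)\sqrt{\Delta X})$. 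Choosing $\Delta=((r+1)s)^{2/3}X^{-1/3}\log^{-2/3}(2+s)$ balances the two terms and gives the claimed $O(((r+1)s)^{1/3}X^{1/3}\log^{2/3}(2+s))$.

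For the special case, $T\le k|E(F)|\le c(r+1)sk$, so $X=O((r+1)^2sk)$. Substituting gives $O((r+1)s^{2/3}k^{1/3}\log^{2/3}(2+s))$.
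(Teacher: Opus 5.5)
Your outline matches the paper's proof: arms from anchors to the essential witnesses inside the two endpoint regions, the three cases for odd arm pairs (witness containment, anchor containment, piercing), truncation at degree $\Delta$, Pach--T\'oth, and the same choice of $\Delta$. Your piercing count, your witness count, the remark that balance with respect to $|V(H)|\le s$ suffices, and the special case are all fine. The gap is that the two steps carrying the lemma are not established.

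\textbf{The parity lemma.} You leave it as the main obstacle and propose canonical routing, but no special routing is needed. Suppose $\alpha,\gamma$ do not pierce, $a_\alpha,p_e\notin\gamma$, and $a_\gamma,p_f\notin\alpha$. Then $a_\alpha,p_e$ lie in the connected set $\alpha\setminus\gamma$, so join them by an arc $\tau_1$ inside it. Likewise join $a_\gamma,p_f$ by an arc $\tau_2$ inside $\gamma\setminus\alpha$. The closed curves $\arm_\alpha(e)\cup\tau_1$ and $\arm_\gamma(f)\cup\tau_2$ cross an even number of times. Since $\tau_1$ avoids $\gamma\supseteq\arm_\gamma(f)\cup\tau_2$ and $\tau_2$ avoids $\alpha\supseteq\arm_\alpha(e)$, every one of these crossings is between the two arms, so the arms cross evenly. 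This works for arbitrary arcs in the interiors. The stronger statement your route aims at, that the parity is \emph{determined} by the endpoint memberships, is false. If $\gamma$ is nested in $\alpha$, an arc of $\alpha$ with both ends outside $\gamma$ may cross a fixed arc of $\gamma$ either zero times or once.

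\textbf{The anchor-containment count.} It does not reduce to the witness count for an arbitrary anchor. A pair with $a_\gamma\in\alpha$ is determined by $\gamma$, a region $\alpha$ in the trace of $a_\gamma$, an edge at $\alpha$, and an edge $f$ at $\gamma$. The last choice is free. If $a_\gamma$ sits at the witness of an arbitrary incident edge $g_\gamma$, the count is $\sum_\gamma\Delta\deg(\gamma)|S_{g_\gamma}|$, which your argument only bounds by $O(\Delta^2T)$. Rebalancing with that term gives a separator with $s^{3/4}$ in place of $s^{2/3}$, so the stated bound is not reached. What is needed is that $\deg(\gamma)$ times the trace size of $a_\gamma$ is at most $\sum_{f\ni\gamma}|S_f|$. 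The paper achieves this by choosing, for each $\gamma$, an incident edge $\se_\gamma$ minimizing $|S_{\se_\gamma}|$. It places $a_\gamma$ arbitrarily close to $p_{\se_\gamma}$, with the same trace, off all boundaries, and with anchors pairwise distinct. Then $\deg(\gamma)|S_{\se_\gamma}|\le\sum_{f\ni\gamma}|S_f|$, and summing over $\gamma$ gives $2\Delta T$. With these two pieces inserted, the rest of your argument goes through as written.
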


\begin{proof}
Let us suppose initially that the maximum degree of a vertex in $F$ is bounded
above by $\Delta_0$, which we will drop later. Let $m=|E(F)|$.
For each non-isolated region $\alpha$, choose a \emph{special edge}
$\se_\alpha$ incident on $\alpha$ minimizing $|S_{\se_\alpha}|$.
Let $p_{\se_\alpha}$ be its corresponding point.

We place an \emph{anchor}  $\ap_\alpha$ representing the vertex corresponding
to a region $\alpha$ in the drawing arbitrarily close to
$p_{\se_\alpha}$ so that $\ap_\alpha$ is contained in exactly the regions in
$S_{\se_\alpha}$ and does not lie on any of the region boundaries.
We also ensure that the anchors for distinct regions are pairwise distinct. This is possible because $U$ is finite and the boundaries of its regions are Jordan curves.

We obtain an initial drawing of $F$ as follows. We  draw 
each edge $e=\alpha\beta$ using two {\em arms}: an arc joining $\ap_\alpha$
to $p_e$ in the interior of $\alpha$, denoted by $\arm_\alpha(e)$, and an arc
joining $\ap_\beta$ to $p_e$ in the interior of $\beta$, denoted by
$\arm_\beta(e)$.
We ensure that the interiors of the arms with a common endpoint $p$ are disjoint in a small neighbourhood of $p$. At this point the two arms of an edge may intersect and make the edge self-intersecting. This is fixed by applying Lemma~\ref{lem:smoothing}. However, in the arguments below, we use the initial drawing,
{\em before} applying Lemma~\ref{lem:smoothing}.

Consider two edges $e$ and $f$ that are independent, i.e., they do not share endpoints. The parity of the pair $\{e,f\}$ is the number of crossings modulo 2 of the curves representing $e$ and $f$ in the drawing.

For any pair of independent edges $e=\alpha\beta$ and $f=\gamma\zeta$ which intersect an odd number of times, one of the arms of $e$ intersects one of the arms of $f$ an odd number of times.
We call such a pair of arms an {\em odd pair}. 
Suppose that $\arm_\alpha(e)$ and $\arm_\gamma(f)$ form an odd pair.
Then, we claim that at least one of the following occurs:
\begin{enumerate}[(a)]
    \item {\em [Witness containment.]} Witness $p_e$ lies in $\gamma$ or witness $p_f$ lies in $\alpha$, or both.
    \item {\em [Anchor containment.]} Anchor $\ap_\gamma$ lies in $\alpha$ or anchor $\ap_\alpha$ lies in $\gamma$, or both.
    \item {\em [Piercing.]} $\gamma$ and $\alpha$ pierce.
\end{enumerate}

To see this, note that if none of the above holds, both $\ap_\alpha$ and $p_e$ lie in the connected region $\alpha\setminus\gamma$ and similarly both $\ap_\gamma$ and $p_f$ lie in the connected region $\gamma\setminus\alpha$. Thus, we can join $\ap_\alpha$ and $p_e$ using an arc $\tau_1$ lying in the interior of $\alpha\setminus\gamma$ creating a closed curve along with the arm of $e$ joining $\ap_\alpha$ and $p_e$.
Similarly, we can join $\ap_\gamma$ and $p_f$ using an arc $\tau_2$ lying in the interior of $\gamma\setminus\alpha$,
creating a closed curve along with the arm of $f$ joining $\ap_\gamma$ and $p_f$. These two closed curves intersect in an even number of points. However, $\tau_1$ and $\tau_2$ do not intersect as they lie in disjoint regions. This implies that the two arms intersect in an even number of points, contradicting our assumption.
\footnote{Related parity observations also appear in Pyrga and
Ray~\cite{PyrgaR08} and Buzaglo, Pinchasi, and
Rote~\cite{BuzagloPinchasiRote2013}.}

Figure~\ref{fig:arm-parity} illustrates the arm construction and the parity
obstruction.

\begin{figure}[t]
\centering
\begin{tikzpicture}[
  x=1cm,y=1cm,
  regA/.style={draw=blue!55!black,fill=blue!25,fill opacity=.22,
               line width=.6pt},
  regB/.style={draw=red!60!black,fill=red!25,fill opacity=.22,
               densely dashed,line width=.6pt},
  armA/.style={draw=blue!65!black,line width=1.05pt},
  armB/.style={draw=red!65!black,line width=1.05pt},
  auxA/.style={draw=blue!65!black,densely dotted,line width=.75pt},
  auxB/.style={draw=red!65!black,densely dotted,line width=.75pt},
  dot/.style={circle,fill=black,inner sep=1.05pt},
  every node/.style={font=\scriptsize}
]
  \draw[regA] (1.75,1.45) ellipse (1.55 and 1.00);
  \draw[regB] (3.35,1.45) ellipse (1.55 and 1.00);
  \coordinate (aa) at (.65,1.55);
  \coordinate (ab) at (4.45,1.35);
  \coordinate (pe) at (2.55,1.45);
  \draw[armA] (aa)
    .. controls (1.25,2.10) and (2.10,1.92) .. (pe);
  \draw[armB] (ab)
    .. controls (3.90,.75) and (3.05,.92) .. (pe);
  \node[dot] at (aa) {};
  \node at (0,1.75) {$\ap_\alpha$};
  \node[dot] at (ab) {};
  \node at (5.20,1.55) {$\ap_\beta$};
  \node[dot,label=above:$p_e$] at (pe) {};
  \node[text=blue!65!black] at (1.43,2.05)
    {$\arm_\alpha(e)$};
  \node[text=red!65!black] at (3.55,.72)
    {$\arm_\beta(e)$};
  \node[text=blue!55!black] at (.78,.82) {$\alpha$};
  \node[text=red!60!black] at (4.30,2.02) {$\beta$};
  \node at (2.55,.08) {edge $e=\alpha\beta$};

  \draw[regA] (8.10,1.45) ellipse (1.90 and 1.10);
  \draw[regB] (10.50,1.45) ellipse (1.90 and 1.10);
  \coordinate (Aa) at (6.65,1.65);
  \coordinate (Pe) at (6.90,.95);
  \coordinate (Ag) at (11.95,1.65);
  \coordinate (Pf) at (11.70,.95);
  \coordinate (xu) at (9.30,1.85);
  \coordinate (xl) at (9.30,1.05);

  \draw[auxA] (Aa)
    .. controls (6.35,1.50) and (6.42,1.08) .. (Pe);
  \draw[auxB] (Ag)
    .. controls (12.25,1.50) and (12.18,1.08) .. (Pf);
  \draw[armA] (Aa)
    .. controls (7.35,2.18) and (8.75,2.12) .. (xu)
    .. controls (9.76,1.72) and (9.76,1.18) .. (xl)
    .. controls (8.75,.70) and (7.45,.70) .. (Pe);
  \draw[armB] (Ag)
    .. controls (11.25,2.18) and (9.85,2.12) .. (xu)
    .. controls (8.84,1.72) and (8.84,1.18) .. (xl)
    .. controls (9.85,.70) and (11.15,.70) .. (Pf);

  \node[dot] at (Aa) {};
  \node at (5.98,1.78) {$\ap_\alpha$};
  \node[dot,label={[xshift=1pt,yshift=-1pt]below right:$p_e$}]
    at (Pe) {};
  \node[dot] at (Ag) {};
  \node at (12.62,1.78) {$\ap_\gamma$};
  \node[dot,label={[xshift=-1pt,yshift=-1pt]below left:$p_f$}]
    at (Pf) {};
  \node[text=blue!65!black] at (6.08,1.18) {$\tau_1$};
  \node[text=red!65!black] at (12.52,1.18) {$\tau_2$};
  \node[text=blue!65!black] at (7.75,2.28)
    {$\arm_\alpha(e)$};
  \node[text=red!65!black] at (10.85,2.28)
    {$\arm_\gamma(f)$};
  \node[text=blue!55!black] at (7.75,.60) {$\alpha$};
  \node[text=red!60!black] at (10.85,.60) {$\gamma$};
  \node at (9.30,.08) {closure in the nonexceptional case};
\end{tikzpicture}
\caption{The arm drawing and its parity obstruction. Left, the edge
$e=\alpha\beta$ is the union of two arms meeting at $p_e$. Right, when
witness containment, anchor containment, and piercing all fail, the dotted
arcs lie in $\alpha\setminus\gamma$ and $\gamma\setminus\alpha$ and close the
two arms. The resulting closed curves have even crossing parity (two
crossings are shown), so the arms cannot form an odd pair.}
\label{fig:arm-parity}
\end{figure}

Next, we bound the number of odd pairs $\{\arm_\alpha(e),
\arm_\gamma(f)\}$. Each such pair has a type $a$, $b$ or $c$ depending on
which of the conditions above holds. If more than one condition holds, we
choose one arbitrarily. We now bound the number of odd pairs of each type.

\begin{enumerate}[(a)]
    \item Suppose that $p_e \in \gamma$. Then, the odd pair is determined by the choice of an edge $e$ and an edge $\gamma\zeta$ where $\gamma$ contains $p_e$. For any choice of $e$, there are $|S_e|$ choices of $\gamma$ and for any such $\gamma$, there are at most $\Delta_0$ choices of $\zeta$. Thus, there are at most $\Delta_0\sum_{e} |S_e| = \Delta_0T$ choices. The case  $p_f \in \alpha$ is also included in this bound.

    \item Suppose that $\ap_\gamma$ lies in $\alpha$. Then, the odd pair is determined by $\ap_\gamma$, the region $\alpha$ containing $\ap_\gamma$, a neighbour $\beta$ of $\alpha$ and a neighbour $\zeta$ of $\gamma$. Once we have chosen $\ap_\gamma$, there are $|S_{\se_\gamma}|$ choices of $\alpha$, $\Delta_0$ choices of $\beta$, and $\deg_F(\gamma)$ choices of $\zeta$. Thus, the total number of choices is
    \[
        \sum_\gamma \Delta_0\deg_F(\gamma)|S_{\se_\gamma}|
        \le \Delta_0\sum_\gamma\sum_{f=\gamma\zeta \in E(F)}|S_f|
        =2\Delta_0T,
    \]
    using the fact that for any edge $f$ incident to $\gamma$, $|S_{\se_\gamma}| \le |S_f|$.
    \item In this case, the odd pair is determined by an edge
    $e=\alpha\beta$, a region $\gamma$ piercing $\alpha$, and an edge
    $f=\gamma\zeta$ incident to $\gamma$.
    The number of such choices is at most $mr\Delta_0$.
    \end{enumerate}

Thus, the number of odd pairs is $O(\Delta_0(rm+T))$, which also bounds the number
of pairs of independent edges crossing an odd number of times.

Apply Lemma~\ref{lem:smoothing} to the drawing. The crossing parity of each
pair is unchanged, so the Pach--T{\'o}th theorem stated in
Section~\ref{sec:prelims} gives a balanced edge bisection of size
$O(\log(s+2)\sqrt{\Delta_0(rm+T)+\Delta_0m})$.
Here $\sum_v\deg_F(v)^2\le2\Delta_0m$. Taking any one endpoint of each edge
in the bisection yields a vertex separator of the same size.

We now drop the degree assumption by adding every vertex of degree greater
than $\Delta_0$ to the separator and applying the preceding argument to the
remaining induced graph. For each retained edge $e$, the restricted witness
$S_e$ intersected with the remaining vertex set still induces only $e$;
these witnesses remain distinct and their total size is at most $T$. Thus the
preceding drawing and parity count apply even though the induced graph need
not itself be a minimum locality graph. By Lemma~\ref{lem:density},
$m=O((r+1)s)$, so there are
$O((r+1)s/\Delta_0)$ such vertices. We therefore obtain a vertex separator
of size
\[
O\left(\frac{(r+1)s}{\Delta_0}+
\log(s+2)\sqrt{\Delta_0((r+1)^2s+T)}\right).
\]
Balancing the two terms gives
\[
\Delta_0=\left(
\frac{(r+1)^2s^2}{\log^2(s+2)((r+1)^2s+T)}
\right)^{1/3}
\]
and the first bound in the statement. If this value is less than $1$, the
trivial separator $U$ already satisfies that bound.

If every essential trace has size at most $k$, then
$T\le k|E(F)|=O((r+1)ks)$. Since
$(r+1)+k\le2(r+1)k$, the first bound simplifies to
$O((r+1)s^{2/3}k^{1/3}\log^{2/3}(2+s))$, proving the second bound.
\end{proof}

\subsection{Locality graph with sublinear separator}
The standard $\epsilon$-net theorem~\cite{HausslerW87} shows the following: a hypergraph of
VC-dimension $d$ on $s$ vertices has a set of
$O(d(s/k)\log(2+s/k))$ vertices meeting every hyperedge of size greater than $k$.
We now use this to handle large traces.

\begin{theorem}[locality graph with sublinear separator]
\label{thm:one-shot}
Let $U$ be a subfamily of a two-coloured $r$-piercing family with $|U| = s$. For
any finite collection of traces restricted to $U$, there is a locality graph
$L_U$ with a $2/3$-balanced vertex separator of size
$O((r+1)s^{3/4}\log^{3/4}(2+s))$. 
\end{theorem}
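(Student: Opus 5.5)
The plan is to remove large traces with an $\epsilon$-net, build a minimum locality graph for the remaining shallow traces, apply Lemma~\ref{lem:essential-separator} to it, and add the net to the separator.

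\textbf{Step 1: hit the deep traces.} Fix a threshold $k\ge1$, to be chosen later. By Lemma~\ref{lem:vc}, the dual set system restricted to $U$ has VC-dimension at most $4(r+1)$. The $\epsilon$-net theorem then gives a set $N\subseteq U$ with
\[
|N|=O\bigl((r+1)(s/k)\log(2+s/k)\bigr)
\]
such that $N$ meets every given trace of size greater than $k$.

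\textbf{Step 2: certify the remaining traces.} Let $U'=U\setminus N$. Every mixed trace of size greater than $k$ contains a vertex of $N$. The mixed traces of size at most $k$ may or may not meet $N$.
\begin{enumerate}[(a)]
\item For mixed traces meeting $N$, I add, for each such trace, one bichromatic edge incident to $N$ when one exists. A trace that contains a vertex of $N$ and is mixed always has such an edge: take $x\in N$ in the trace and any vertex of the other colour in the trace. All these edges are incident to $N$, so deleting $N$ removes them.
\item For the remaining traces, namely mixed traces of size at most $k$ avoiding $N$, I take a minimum locality graph $F$ on $U'$. Every essential witness of $F$ is one of these traces, so it has size at most $k$.
\end{enumerate}
Let $L_U$ be the union of the edges from (a) and $F$. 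Then $L_U$ is a locality graph for the given traces.

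\textbf{Step 3: separate $F$.} Apply the second bound of Lemma~\ref{lem:essential-separator} to $F$ on $U'$. This gives a $2/3$-balanced separator $X$ of $F$ of size $O((r+1)s^{2/3}k^{1/3}\log^{2/3}(2+s))$. Since the edges in (a) all touch $N$, the set $N\cup X$ is a separator of $L_U$. For balance, every component of $L_U-(N\cup X)$ is a component of $F-X$, so it has at most $2|U'|/3\le 2s/3$ vertices.

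\textbf{Step 4: choose $k$.} Balance the two terms
\[
(r+1)\frac{s}{k}\log s \quad\text{and}\quad (r+1)s^{2/3}k^{1/3}\log^{2/3}s .
\]
This gives $k^{4/3}=s^{1/3}\log^{1/3}s$, so $k=(s\log s)^{1/4}$. Both terms then become $(r+1)s^{3/4}\log^{3/4}(2+s)$, which is the claimed bound. When $s$ is small, or when $k$ falls below $1$, the trivial separator $U$ suffices.

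\textbf{Main obstacle.} The delicate step is Step 2: the traces hit by $N$ must not leave edges in $L_U-N$. I expect choosing the certifying edges incident to $N$ to settle this. The only other point to check is the log factor $\log(2+s/k)\le\log(2+s)$.
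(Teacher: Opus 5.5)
Your proposal is correct and matches the paper's proof step for step: an $\epsilon$-net (the paper's $Z$, your $N$) hits every trace of size greater than $k=\lceil s^{1/4}\log^{1/4}(2+s)\rceil$; each mixed trace meeting the net gets a bichromatic edge incident to it; a minimum locality graph on $U\setminus Z$ for the remaining shallow mixed traces is separated via Lemma~\ref{lem:essential-separator}; and the union of the net with that separator is the separator of $L_U$. Your explicit observation that $L_U-(N\cup X)=F-X$, so that balance is inherited with respect to $|U|=s$, is exactly the point the paper leaves implicit.
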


\begin{proof}
The theorem is immediate for  $s = O(1)$. Otherwise put
$k=\lceil s^{1/4}\log^{1/4}(2+s)\rceil$. By Lemma~\ref{lem:vc} and the
$\epsilon$-net theorem, there is $Z\subseteq U$ of size
$
O\left((r+1)\frac{s}{k}\log\left(2+\frac{s}{k}\right)\right)
$
meeting every restricted trace of size greater than $k$.

Let $W=U\setminus Z$. For the mixed traces not intersecting $Z$, all of which have size at most $k$, choose a minimum locality graph $F$ on $W$.  By Lemma~\ref{lem:density}, $F$ has $O((r+1)s)$ edges. Its essential traces have
total incidence $O((r+1)ks)$, so Lemma~\ref{lem:essential-separator} gives a
balanced separator $Y$ of size
$O((r+1)s^{2/3}k^{1/3}\log^{2/3}(2+s))$.
To construct a locality graph $L_U$ for all traces in $U$
we need to add to $F$ a mixed edge of every mixed trace intersecting $Z$ (which we ignored when constructing $F$).
For each trace intersecting $Z$, we add an arbitrary mixed edge involving an element of $Z$. Then, $Y \cup Z$ is a $2/3$-balanced separator for $L_U$ and has size $|Z| + |Y| = O((r+1)s^{3/4}\log^{3/4}(2+s))$.
\end{proof}

\subsection{Adaptive recursion}
\label{sec:adaptive}
Our goal now is to prove a decomposition result similar to Frederickson's
result for planar graphs~\cite{Frederickson87}, but retaining only the
properties essential for the analysis of local search algorithms.

Let $V\subseteq\Gamma$ be two-coloured, and fix a finite collection
$\mathcal T$ of traces on $V$. Fix a target group size $t$ which is
sufficiently large as a function of $r$. We apply
Theorem~\ref{thm:one-shot} repeatedly and build a recursion tree as follows.
We start with the bag $V$ at the root. As long as a leaf bag $U$ has size
greater than $t$, we apply Theorem~\ref{thm:one-shot} to the restrictions
$T\cap U$, $T\in\mathcal T$, and obtain a locality graph $L_U$ and a balanced
separator $X_U$. For $t$ sufficiently large as a function of $r$,
$|X_U|\le |U|/12$. We group the components of $L_U-X_U$ into two unions $A$
and $B$ so that both $U_1=A\cup X_U$ and $U_2=B\cup X_U$ have between
$|U|/4$ and $3|U|/4$ vertices. Indeed, if some component has size at least
$|U|/4$, take it for $A$; otherwise combine components greedily until their
union first has size at least $|U|/4$. The $2/3$-balance and the bound on
$|X_U|$ give the claimed sizes in either case. We make $U_1$ and $U_2$ the
two children of $U$. When this process stops, let $\mathcal C$ be the
collection of leaf bags and $\mathcal U$ the collection of non-leaf bags.
Every leaf bag has size at most $t$.

For a vertex $v \in V$, let $\mu(v)$ denote the number of leaf bags containing $v$. 
Define
the total excess by
\[
\exc(\mathcal{C})=\sum_{C\in\mathcal{C}}|C|-|V|
=\sum_{v\in V}(\mu(v)-1).
\]
A vertex is \emph{interior} to a leaf bag if that is the only leaf bag containing it. 
All other vertices are {\em boundary} vertices.

\begin{lemma}[Aggregate overlap]
\label{lem:overlap}
There are $O(1+|V|/t)$ leaf bags, each of size at most $t$, whose union is
$V$.
Moreover,
\[
\exc(\mathcal{C})=
\sum_{U \in \mathcal{U}}|X_U|
=O\left((r+1)\frac{\log^{3/4}(2+t)}{t^{1/4}}|V|\right).
\]
In particular, for any $\delta\in(0,1]$, taking $t$ to be a sufficiently
large constant multiple of
$((r+1)/\delta)^4\log^3(2+(r+1)/\delta)$ makes
$\exc(\mathcal C)\le\delta|V|$.
Furthermore, the number of boundary vertices is at most $\exc(\mathcal{C})$, and their
total number of leaf occurrences is at most $2\exc(\mathcal{C})$.
\end{lemma}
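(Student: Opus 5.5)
The plan is to analyse the recursion tree directly: one identity handles the excess, and a charging argument over size classes handles the separator sizes. First, the covering and size claims. Every split replaces $U$ by $U_1=A\cup X_U$ and $U_2=B\cup X_U$, where $A,B,X_U$ partition $U$. So $U_1\cup U_2=U$, and by induction the leaf bags cover $V$. Every leaf has size at most $t$ because splitting stops only then. Moreover $|U_1|+|U_2|=|U|+|X_U|$. Summing this identity over all non-leaf bags telescopes to $\sum_{C\in\mathcal C}|C|=|V|+\sum_{U\in\mathcal U}|X_U|$, which is the first equality for $\exc(\mathcal C)$. The second expression $\sum_v(\mu(v)-1)$ is just a recount of $\sum_C|C|-|V|$ by vertices.

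To bound $\sum_U|X_U|$, write the bound of Theorem~\ref{thm:one-shot} as $|X_U|\le |U|\,g(|U|)$ with $g(s)=C(r+1)s^{-1/4}\log^{3/4}(2+s)$. For $s\ge t$ (and $t$ large), $g$ is decreasing and decays like a power of $s$. Partition the non-leaf bags into size classes $\mathcal U_j=\{U: t(4/3)^j<|U|\le t(4/3)^{j+1}\}$ for $j\ge0$. A child has size at most $3|U|/4$, so it lies in a strictly lower class than its parent. Hence each class is an antichain in the recursion tree. Since each bag is the union of the leaves below it, $\sum_{U\in\mathcal U_j}|U|\le\sum_{C}|C|=|V|+\exc(\mathcal C)$. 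Therefore
\[
\exc(\mathcal C)\le(|V|+\exc(\mathcal C))\sum_{j\ge0}g\bigl(t(4/3)^j\bigr)=O\bigl(g(t)\bigr)(|V|+\exc(\mathcal C)),
\]
where the series is dominated by its first term because of the $s^{-1/4}$ decay. Once $t$ is large enough that this factor is at most $1/2$, we can absorb the excess term and get $\exc(\mathcal C)=O(g(t)|V|)$, which is the stated bound. Choosing $t$ as a large multiple of $((r+1)/\delta)^4\log^3(2+(r+1)/\delta)$ makes $g(t)\le\delta$ up to constants: $t^{1/4}\gtrsim ((r+1)/\delta)\log^{3/4}$, and the $\log t$ term is comparable to $\log((r+1)/\delta)$. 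This gives $\exc(\mathcal C)\le\delta|V|$.

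The number of leaves is bounded as follows. Each leaf is a child of a bag of size greater than $t$, and children have size at least a quarter of their parent, so every leaf has size greater than $t/4$, except when the root itself is a leaf. Hence the number of leaves is at most $1+4(|V|+\exc(\mathcal C))/t=O(1+|V|/t)$, since $\exc(\mathcal C)\le|V|$. Finally, a boundary vertex has $\mu(v)\ge2$, so $\mu(v)-1\ge1$ and also $\mu(v)\le2(\mu(v)-1)$. Summing over boundary vertices gives the count bound of $\exc(\mathcal C)$ and the occurrence bound of $2\exc(\mathcal C)$.

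The main obstacle is the self-referential nature of the bag sizes: bags overlap, so the total size at a level can exceed $|V|$. The antichain-by-size-class argument together with the absorption step is the part that has to be set up carefully. It also needs $g$ to be monotone and summable on the range $s\ge t$, which I would check from the explicit form of $g$.
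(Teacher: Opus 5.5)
Your proof is correct, but it bounds $\sum_{U}|X_U|$ by a different mechanism from the paper's.

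\textbf{The paper's route.} It uses the potential $\psi(s)=s^{3/4}\log^{3/4}(256+s)$ and proves the strict superadditivity $\psi(s_1)+\psi(s_2)-\psi(s)\ge\eta\psi(s)$ for $s_1,s_2\in[s/4,3s/4]$ with $s_1+s_2\ge s$, for a small explicit $\eta>0$. Each $|X_U|$ is then charged to the potential increase at $U$. Telescoping gives $\exc(\mathcal C)=O\bigl((r+1)\sum_{C\in\mathcal C}\psi(|C|)\bigr)$. Because leaf sizes lie in $(t/4,t]$, this becomes the same self-referential inequality that you absorb.

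\textbf{Your route.} You group the non-leaf bags into geometric size classes. The $3/4$ upper balance makes each class an antichain, so its total size is at most $\sum_C|C|=|V|+\exc(\mathcal C)$. You then sum $g$ over the scales $t(4/3)^j$. The two analytic facts you flagged do hold:
\begin{itemize}
\item $g$ is decreasing once $\ln(2+s)\ge 3$, since then $(2+s)\ln(2+s)>3s$.
\item $\log\bigl(2+t(4/3)^j\bigr)\le(1+j)\log(2+t)$, which gives $g\bigl(t(4/3)^j\bigr)\le(4/3)^{-j/4}(1+j)^{3/4}g(t)$. The series is therefore $O(g(t))$.
\end{itemize}

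\textbf{What each buys.} The paper's argument is a one-line telescoping, but it rests on a delicate numerical inequality. That inequality uses both balance bounds and the specific exponent $3/4$, which is why the logarithm is shifted to $\log(256+s)$ to keep $\eta>0$. Yours is a Frederickson-style level count. It uses only the upper balance bound for the overlap estimate; the lower bound enters only when counting leaves. It also works unchanged for any separator bound of the form $|U|\,g(|U|)$ with $g$ eventually decreasing and summable over geometric scales.

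The remaining parts coincide with the paper's: covering, the excess identity, the leaf count via leaf sizes exceeding $t/4$, the boundary-vertex bounds, and the choice of $t$.
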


\begin{proof}
The fact that $\exc(\mathcal{C})=
\sum_{U \in \mathcal{U}}|X_U|$ follows from the observation that 
at any non-leaf bag $U$, 
a separator vertex appears in both children $U_1$ and $U_2$ and all other vertices belong to exactly one of the bags. 
Thus, the number of leaf bags containing a vertex is one more than the number
of times it appears in a separator at a non-leaf bag. To analyze
$\sum_{U\in\mathcal U}|X_U|$, associate the potential
$\psi(|U|)$ with every bag $U$, where
$\psi(s)=s^{3/4}\log^{3/4}(256+s)$. This function has the following
property: if $s_1,s_2\in[s/4,3s/4]$ and $s_1+s_2\ge s$, then
\[
\psi(s_1)+\psi(s_2)-\psi(s)\ge\eta\psi(s)
\]
for an absolute constant $\eta>0$. Indeed,
$\log(256+s/4)\ge(5/6)\log(256+s)$, and the constraints on $s_1,s_2$ give
\[
s_1^{3/4}+s_2^{3/4}
\ge\bigl(4^{-3/4}+(3/4)^{3/4}\bigr)s^{3/4}.
\]
Consequently, one may take
\[
\eta=(5/6)^{3/4}
\bigl(4^{-3/4}+(3/4)^{3/4}\bigr)-1>0.
\]
Thus, if a non-leaf bag $U$ has children
$U_1$ and $U_2$, Theorem~\ref{thm:one-shot} gives
\[
\psi(|U_1|)+\psi(|U_2|)-\psi(|U|)
=\Omega(|X_U|/(r+1)).
\]
Summing these inequalities over the recursion tree and telescoping gives
\[
\exc(\mathcal C)=O\left((r+1)
\sum_{C\in\mathcal C}\psi(|C|)\right).
\]
If the root is a leaf, the claimed bounds are immediate. Otherwise every leaf
bag $C$ has size in $(t/4,t]$. Hence
$|\mathcal C|=O((|V|+\exc(\mathcal C))/t)$ and
$\psi(|C|)\le\psi(t)$, which imply
\[\exc(\mathcal{C})
= O\left((r+1)\frac{\psi(t)}{t}
  (|V|+\exc(\mathcal{C}))\right).
\]
With large enough $t$ as a function of $r$, we obtain
\[\exc(\mathcal{C}) 
= O\left((r+1)\frac{\psi(t)}{t}|V|\right)
= O\left((r+1)\frac{\log^{3/4}(2+t)}{t^{1/4}}|V|\right).
\]

\noindent
As observed before, the number of leaf bags $|\mathcal{C}| = O((|V| + \exc(\mathcal{C}))/t) = O( 1 + |V|/t)$.

Since each boundary vertex belongs to at least $2$ leaf bags, it contributes at least $1$ to the excess. This implies that the number of boundary vertices is at most $\exc(\mathcal{C})$. 
Also, if a boundary vertex $v$ belongs to $\mu(v)\ge2$ leaf bags, it
contributes $\mu(v)-1\ge\mu(v)/2$ to the excess. This implies that the total
number of leaf occurrences of boundary vertices is at most
$2\exc(\mathcal C)$.
\end{proof}

\begin{theorem}[Locality division]
\label{thm:locality-division}
Let $V$ be a two-coloured subfamily of an $r$-piercing family, let
$\mathcal T$ be a finite collection of traces on $V$, and let
$\delta\in(0,1)$. There is an integer
\[
t=O\left(((r+1)/\delta)^4
\log^3(2+(r+1)/\delta)\right)
\]
and a set $X\subseteq V$ such that $V\setminus X$ is the disjoint union of
groups $G_1,\ldots,G_q$ for some $q\ge0$, each of size at most $t$, with the
following properties for some locality graph $L$ for all mixed traces in
$\mathcal T$:
\begin{enumerate}[(i)]
\item every edge of $L$ has both endpoints in $G_i\cup X$ for some $i$;
\item $|X|\le\delta|V|$,
      $|G_i\cup N_L(G_i)|\le t$ for every $i$, and
      $\sum_{i=1}^q|N_L(G_i)|\le2\delta|V|$.
\end{enumerate}
Moreover, $q=O(1+|V|/t)$. 
\end{theorem}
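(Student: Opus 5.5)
The plan is to build everything from the recursion tree of Section~\ref{sec:adaptive}. Run the adaptive recursion on $V$ and $\mathcal T$ with group size $t$, where $t$ is a sufficiently large constant multiple of $((r+1)/\delta)^4\log^3(2+(r+1)/\delta)$. By Lemma~\ref{lem:overlap}, this choice gives $\exc(\mathcal C)\le\delta|V|$ and $|\mathcal C|=O(1+|V|/t)$. The objects in the statement are then defined as follows.
\begin{enumerate}[(1)]
\item $X$ is the set of boundary vertices, i.e.\ those lying in at least two leaf bags.
\item For each leaf bag $C$, the group $G_C$ is the set of vertices interior to $C$. Empty groups are discarded.
\end{enumerate}
The groups are disjoint, they cover $V\setminus X$, and $q\le|\mathcal C|=O(1+|V|/t)$. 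Lemma~\ref{lem:overlap} also gives $|X|\le\exc(\mathcal C)\le\delta|V|$.

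The key step is a descent argument. It shows that every mixed trace $T\in\mathcal T$ has a mixed restriction to some leaf bag. Suppose $U$ is a non-leaf bag and $T\cap U$ is mixed. Then $L_U$ contains an edge $e$ inside $T\cap U$. Since $X_U$ separates $L_U$, no edge of $L_U$ joins a component in $A$ to a component in $B$. Hence both endpoints of $e$ lie in $A\cup X_U=U_1$ or both lie in $B\cup X_U=U_2$. Thus $T\cap U_i$ is mixed for some child $U_i$, and we descend into that child. Starting from the root, where $T\cap V=T$ is mixed, we reach a leaf $C_T$ with $T\cap C_T$ mixed. Define $L$ by choosing, for each mixed $T\in\mathcal T$, one red--blue pair in $T\cap C_T$. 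Then $L$ is a locality graph for all mixed traces in $\mathcal T$, and every edge of $L$ lies inside a single leaf bag. This is the step that needs care: it uses that the separators are recomputed for the restricted traces, and that the children overlap exactly in $X_U$.

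The properties now follow.
\begin{itemize}
\item For (i): an edge of $L$ inside leaf $C$ has each endpoint either interior to $C$, hence in $G_C$, or in $X$.
\item Closure of groups: if $v\in G_C$ and $uv\in E(L)$, the edge lies in some leaf containing $v$. That leaf must be $C$, so $u\in C$. Therefore $G_C\cup N_L(G_C)\subseteq C$, giving $|G_C\cup N_L(G_C)|\le t$.
\item Neighbourhood bound: $N_L(G_C)\subseteq C\cap X$. Summing over $C$ counts each boundary vertex at most once per leaf occurrence. By Lemma~\ref{lem:overlap}, the total is at most $2\exc(\mathcal C)\le 2\delta|V|$.
\end{itemize}
The only remaining checks are routine. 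The case where the root is already a leaf is trivial. The requirement in Section~\ref{sec:adaptive} that $t$ be large enough as a function of $r$ (so that $|X_U|\le|U|/12$) is satisfied by the chosen $t$, since $\delta<1$.
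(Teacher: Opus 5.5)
Your proposal is correct and follows the paper's proof essentially step for step: boundary vertices form $X$, the interior vertices of each leaf form the groups, you descend through the recursion tree to a leaf where the trace stays mixed, and you use $N_L(G_i)\subseteq C_i\cap X$. The one detail the paper makes explicit and you skip is that an edge chosen in a leaf whose group was discarded lies entirely in $X$, so (i) needs $q\ge1$; this follows from $|X|\le\delta|V|<|V|$ when $V\neq\varnothing$.
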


\begin{proof}
If $V=\varnothing$, take $q=0$, $X=\varnothing$, and $L$ to be the empty
graph. Hence assume that $V$ is nonempty.
Choose $t$ as in Lemma~\ref{lem:overlap} so that
$\exc(\mathcal C)\le\delta|V|$, and perform the adaptive recursion above.
Write the leaf bags as $C_1,\ldots,C_\ell$ and let
\[
X=\{v\in V:\mu(v)\ge2\},
\qquad G_j=C_j\setminus X\quad(1\le j\le\ell).
\]
Omit the empty groups and relabel the remaining pairs as
$(C_1,G_1),\ldots,(C_q,G_q)$. These groups are disjoint and cover
$V\setminus X$. Since $G_i\subseteq C_i$, each group has size at most $t$,
and Lemma~\ref{lem:overlap} gives $q\le\ell=O(1+|V|/t)$. It also gives
\[
|X|\le\exc(\mathcal C)\le\delta|V|,
\qquad
\sum_i|C_i\cap X|\le2\exc(\mathcal C)\le2\delta|V|.
\]
Since $\delta<1$, at least one group remains.

It remains to construct a single locality graph. For each original mixed
trace $T\in\mathcal T$, start at the root. Suppose that $T\cap U$ is mixed at
a non-leaf bag $U$. The graph $L_U$ contains a bichromatic edge of $T\cap U$.
The endpoints of this edge occur together in at least one child of $U$:
after deleting $X_U$, an edge cannot join different components, and every
separator vertex occurs in both children. Continue through such a child.
This reaches a leaf bag $C$ in which $T\cap C$ is mixed. Let $L$ be the graph
on $V$ obtained by choosing one bichromatic pair from $T\cap C$ for every
mixed trace $T$. This is a locality graph. If the group associated with $C$
remains, then $C=C_i$ for some $i$, and the chosen edge lies in
$C_i=G_i\cup(C_i\cap X)\subseteq G_i\cup X$.
If that group was omitted, both endpoints of the edge lie in $X$, so the edge
lies in $G_i\cup X$ for any remaining group. This proves (i).

Finally, let $v\in N_L(G_i)$, and choose $u\in G_i$ such that $uv\in E(L)$.
The leaf bag from which this edge was chosen must be $C_i$, since $u$ occurs
in only one leaf bag. Hence $v\in C_i\setminus G_i=C_i\cap X$, and therefore
\[
N_L(G_i)\subseteq C_i\cap X.
\]
Consequently,
\[
|G_i\cup N_L(G_i)|\le |C_i|\le t,
\qquad
\sum_i|N_L(G_i)|
 \le\sum_i|C_i\cap X|
 \le2\delta|V|.
\]
This proves (ii).
\end{proof}

\begin{remark}[Comparison with Frederickson's division]
Theorem~\ref{thm:locality-division} is analogous to Frederickson's division
for planar graphs~\cite{Frederickson87}: after removing a small exceptional
set, the remaining vertices split into small groups, and every relevant edge
is confined to one group together with the exceptional set. Here the bound
on the exceptional neighbourhood is aggregate rather than per group, which is
exactly what the local-search proofs need. Also, the locality graph and its
separator are recomputed at every recursive step; unlike Frederickson's
construction, we do not recurse on a single fixed graph.
\end{remark}

\section{PTAS for Discrete Independent Set and Set Cover}
\label{sec:alg}

Let $\Gamma$ be an $r$-piercing family of regions and let $P$ be a set of points
in the plane.
In this section, we define the local search algorithms and show that they yield
a PTAS for the Discrete Independent Set and Set Cover problems defined by $\Gamma$ and $P$.

\paragraph*{$t$-Local Search.}
For any given parameter $t$, the algorithm is the following. We always maintain a feasible solution $S$
which in the beginning is chosen arbitrarily.
At any stage, we check if it is possible to replace a subset $T\subseteq S$ by
a subset $T'\subseteq\Gamma\setminus S$ so that $|T|,|T'|\le t$ and
$S'=(S\setminus T)\cup T'$ is a feasible solution with a better objective
value than $S$. If this is possible, we replace $S$ by $S'$ and continue. If
such a {\em local improvement} is not possible, we say that the current
solution is {\em $t$-locally optimal}, in which case we return $S$. A
straightforward implementation has running time $|P|n^{O(t)}$.

\DIS*

\begin{proof}
Let $\opt$ be an optimal solution and $\local$ the solution returned by
$t$-local search. Let $I=\opt\cap\local$, put
$R=\opt\setminus I$ and $B=\local\setminus I$, and let $P'$ be the points
not covered by a region in $I$. Colour $R$ red and $B$ blue. Every exchange
used below inserts only regions of $R$, so it remains feasible when the fixed
family $I$ is restored and may be tested against the local optimality of
$\local$. Feasibility of $\opt$ and $\local$ also implies that no region of
$R\cup B$ contains a point already covered by $I$. Thus every conflict among
$R\cup B$ occurs at a point of $P'$, where the trace contains at most one red
and at most one blue region. If $R=\varnothing$, optimality of $\opt$ and
feasibility of $\local$ imply $B=\varnothing$, and the result is immediate;
assume henceforth that $R\ne\varnothing$.

Apply Theorem~\ref{thm:locality-division} to $V=R\cup B$, using the traces of
points in $P'$ and $\delta=\epsilon/9$. Let $X$, $G_1,\ldots,G_q$, and $L$
be the resulting exceptional set, groups, and locality graph. Put
\[
R_i=R\cap G_i,
\qquad B_i=N_L(R_i)\cap B.
\]
Every blue region that conflicts with a member of $R_i$ belongs to $B_i$:
the corresponding trace is a two-element mixed trace, and its unique possible
locality edge joins these two regions. Thus deleting $B_i$ and inserting
$R_i$ is a feasible exchange. Moreover,
\[
R_i\cup B_i\subseteq G_i\cup N_L(G_i),
\]
so both sides have size at most $t$. Local optimality gives
$|R_i|\le|B_i|$.

The groups partition $V\setminus X$. If $b\in B\setminus X$ belongs to $B_i$,
an edge of $L$ joins $b$ to a vertex $u\in R_i\subseteq G_i$.
Property~(i) places both endpoints in $G_j\cup X$ for some $j$. Since
$u\in G_i$ and the groups are disjoint, $j=i$, and hence $b\in G_i$. Thus a
blue region outside $X$ belongs to at most one $B_i$. Every occurrence of a
blue region of $X$ in $B_i$ is counted by $N_L(G_i)$.
Hence, summing over the groups and using
Theorem~\ref{thm:locality-division},
\[
|R|-|R\cap X|
=\sum_i|R_i|
\le\sum_i|B_i|
\le |B|+\sum_i|N_L(G_i)|
\le |B|+2\delta(|R|+|B|).
\]
Since $|R\cap X|\le|X|\le\delta(|R|+|B|)$, it follows that
$|R|\le|B|+3\delta(|R|+|B|)$. Therefore
\[
\frac{|B|}{|R|}\ge
\frac{1-3\delta}{1+3\delta}\ge1-\epsilon.
\]
The choice of $t$ has the order stated in the theorem.
Adding back the common regions in $I$ preserves the approximation guarantee.
\end{proof}

The proof of the Set Cover theorem uses the same recursive division but a
different exchange.

\SC*

\begin{proof}
Let $\opt$ be an optimal cover and $\local$ the cover returned by $t$-local
search.
Let $I=\opt\cap\local$, $R=\opt\setminus I$, and
$B=\local\setminus I$, and let $P'$ be the points not already covered by $I$.
If $R=\varnothing$, then $I$ already covers every point, and local optimality
forces $B=\varnothing$; the result is immediate. Hence assume
$R\ne\varnothing$.
Every trace of a point in $P'$ contains at least one red and one blue region.
Apply Theorem~\ref{thm:locality-division} to $V=R\cup B$, these traces, and
$\delta=\epsilon/9$. Let $X$, $G_1,\ldots,G_q$, and $L$ be the resulting
objects. For each group, put
\[
B_i=B\cap G_i,
\qquad
R_i=R\cap\bigl(G_i\cup N_L(G_i)\bigr).
\]
We claim that $(B\setminus B_i)\cup R_i$ covers every point in $P'$. If this
were false for a point $p$, every blue region containing $p$ would belong to
$G_i$. The locality graph has a bichromatic edge in the trace of $p$. Its
blue endpoint lies in $G_i$, so its red endpoint lies in
$G_i\cup N_L(G_i)$ and hence belongs to $R_i$, a contradiction.
Thus this is a feasible exchange, and both sides have size at most $t$.
Local optimality gives $|B_i|\le|R_i|$.

Summing over the groups and using Theorem~\ref{thm:locality-division} gives
\[
|B|-|B\cap X|
=\sum_i|B_i|
\le\sum_i|R_i|
\le |R|+\sum_i|N_L(G_i)|
\le |R|+2\delta(|R|+|B|).
\]
Since $|B\cap X|\le|X|\le\delta(|R|+|B|)$, we obtain
$|B|\le|R|+3\delta(|R|+|B|)$, and hence
\[
\frac{|B|}{|R|}\le
\frac{1+3\delta}{1-3\delta}\le1+\epsilon.
\]
Again, the choice of $t$ has the order stated in the theorem.
Adding back the common regions proves the theorem.
\end{proof}

\section{Shallow traces}

In this section we study the shallow-cell complexity of an $r$-piercing family of regions.

A hypergraph is $\lambda$-linear, following Ackerman, Keszegh, and
P\'alv\"olgyi~\cite{AKP}, if every restriction to $m$ vertices has at most
$\lambda m$ two-element hyperedges. By Lemma~\ref{lem:density}, the dual set
system is $25(r+1)$-linear. We use two standard consequences of bounded
VC-dimension and hereditary density of two-element hyperedges from their work.

\begin{lemma}[Small hyperedges and injective signatures]
\label{lem:signatures}
Let $\mathcal H=(V,\mathcal E)$ be a $\lambda$-linear hypergraph of
VC-dimension at most $d$, where $\lambda$ and $d$ are fixed.
\begin{enumerate}[(i)]
\item Every restriction to $W\subseteq V$ has at most $b_{\lambda,d}|W|$ distinct
      nonempty hyperedges of size at most $d$, where
      \[
      b_{\lambda,d}=1+\sum_{h=2}^d
      \frac{(2\mathrm e\lambda d)^{h-1}}{h!}.
      \]
\item Every nonempty hyperedge $E$ can be assigned a signature
      $\sigma(E)\subseteq E$, with $|\sigma(E)|\le d$, so that distinct
      nonempty hyperedges have distinct signatures.
\end{enumerate}
\end{lemma}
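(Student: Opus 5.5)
Part~(ii) I would obtain by the down-shifting (compression) technique behind the Sauer--Shelah lemma. Fix an order $v_1,\dots,v_n$ of $V$ and apply the shift operators $S_{v_1},\dots,S_{v_n}$ to the family $\mathcal E$. Here $S_v$ replaces $E$ by $E\setminus\{v\}$ whenever $v\in E$ and $E\setminus\{v\}$ is not already in the family. Three standard facts are needed:
\begin{itemize}
\item each shift is injective on the family;
\item each shift maps every set to a subset of itself, so the composition gives an injection $E\mapsto\sigma(E)\subseteq E$;
\item shifting does not increase the set of shattered sets, so the final family has VC-dimension at most $d$.
\end{itemize}
After shifting along all coordinates until stable, the family is down-closed. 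A down-closed family shatters each of its members, so every member has size at most $d$. Hence $|\sigma(E)|\le d$, and distinct hyperedges receive distinct signatures. One small point needs care: at most one hyperedge can be mapped to $\varnothing$. If the signatures are required to be nonempty subsets, I would patch this single exceptional hyperedge separately, and the paper's additive constant $2$ in Theorem~\ref{thm:scc} suggests such slack is available.

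For part~(i) I would argue by induction on $h$, bounding the number $f_h(W)$ of distinct size-$h$ hyperedges of the restriction to $W$. The case $h=1$ gives at most $|W|$ hyperedges, and the case $h=2$ gives at most $\lambda|W|$ by $\lambda$-linearity. For $h\ge2$ I would charge each size-$h$ hyperedge $E$ of $\mathcal H|_W$ to a pair $(E',v)$ with the following properties:
\begin{itemize}
\item $E'=E\setminus\{v\}$ is a size-$(h-1)$ hyperedge of a suitable restriction;
\item $v$ is adjacent to some $u\in E'$ in the graph of two-element hyperedges of the restriction to $(W\setminus E)\cup\{u,v\}$.
\end{itemize}
The point is that in $\mathcal H|_E$, the set $E$ itself is a hyperedge, and restricting further to $\{u,v\}$ exhibits $\{u,v\}$ as a two-element trace.

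To make the charge nearly injective, I would order $W$ by repeatedly removing a vertex of minimum degree in the two-element graph. By $\lambda$-linearity each vertex then has at most $2\lambda$ later neighbours, in the spirit of the degeneracy argument used in Lemma~\ref{lem:density}. I would let $v$ be the last vertex of $E$ in this order. Summing over choices then gives a recursion of the form $f_h\le (2\lambda d/h)\cdot e\cdot f_{h-1}$, whose solution matches the shape $(2e\lambda d)^{h-1}/h!$ of $b_{\lambda,d}$.

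The VC-dimension bound enters to keep the number of candidate pairs for a given $v$ controlled by $d$. Concretely, there are at most $d$ choices of which element $u$ of the size-$\le d$ set $E'$ certifies adjacency. The $e$ and $h!$ factors come from a standard $\binom{\cdot}{h}\le (e\,\cdot/h)^h$ style estimate.

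The main obstacle is exactly the step in part~(i) that makes the charge $E\mapsto(E\setminus\{v\},v)$ well defined. The difficulty is that two-element hyperedges must be taken in a restriction depending on $E$, while $\lambda$-linearity controls only a fixed restriction. If the direct charging fails, I would fall back on the sampling argument of Clarkson--Shor type already used in Lemma~\ref{lem:density}. There I would pick a random subset $W'$ with inclusion probability about $1/(\lambda d)$ and condition on $E\subseteq W'$, so that each $h$-set hyperedge becomes detectable via its two-element sub-traces. Comparing expectations would then yield the same order of bound for each $h\le d$.
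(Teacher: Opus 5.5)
Your part (ii) is fine. Down-shifting is injective, maps each set to a subset of itself, and creates no new shattered sets. It ends in a down-closed family, every member of which is shattered and hence has size at most $d$. The paper does not prove this lemma; it imports both parts from Ackerman, Keszegh, and P\'alv\"olgyi. It also explicitly allows one nonempty hyperedge to have the empty signature, so no patch is needed.

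Part (i) has a genuine gap, and it is the one you flag yourself. Your degeneracy order lives in the graph of two-element hyperedges of $\mathcal H|_W$. For $|E|\ge3$, however, a pair $\{u,v\}\subseteq E$ is in general not an edge of that graph; it is a two-element trace only on the $E$-dependent set $(W\setminus E)\cup\{u,v\}$. So the bound of $2\lambda$ later neighbours gives no control over $E$, and the recursion $f_h\le(2\mathrm e\lambda d/h)f_{h-1}$ is asserted rather than derived. The fallback does not help as stated: conditioning on $E\subseteq W'$ keeps $E$ as an $h$-element trace and produces no two-element trace at all.

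The missing idea is to pass to the graph $G_W$ on $W$ in which $uv$ is an edge iff $u$ and $v$ lie together in some hyperedge of $\mathcal H|_W$ of size at most $d$. You then prove $G_W$ sparse by sampling with the opposite conditioning. Put each vertex of $W$ into $W'$ independently with probability $p=1/d$. For an edge $uv$ witnessed by $E$, consider the event that $u,v\in W'$ and the remaining at most $d-2$ elements of $E$ are outside $W'$. It has probability at least $p^2(1-p)^{d-2}\ge p^2/\mathrm e$, and on it $\{u,v\}$ is a two-element hyperedge of $\mathcal H|_{W'}$. Comparing with $\lambda\,\mathbb E|W'|=\lambda p|W|$ gives $|E(G_W)|\le\mathrm e\lambda d|W|$ for every $W$.

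Now every size-$h$ hyperedge $E$ of $\mathcal H|_W$ and every $x\in E$ yield the size-$(h-1)$ hyperedge $E\setminus\{x\}$ of $\mathcal H|_{N_{G_W}(x)}$, and this correspondence is injective. Hence, with $D=2\mathrm e\lambda d$ and by induction from $f_1(W)\le|W|$,
\[
h\,f_h(W)\le\sum_{x\in W}f_{h-1}\bigl(N_{G_W}(x)\bigr)\le\frac{D^{h-2}}{(h-1)!}\sum_{x\in W}\deg_{G_W}(x)\le\frac{D^{h-1}}{(h-1)!}|W|.
\]
So $f_h(W)\le D^{h-1}|W|/h!$, which summed over $h\le d$ is exactly $b_{\lambda,d}|W|$. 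Note that VC-dimension plays no role in (i) beyond the size cutoff $d$. The factor $\mathrm e d$ comes from the sampling probability and $h!$ from the $h$-fold double count, not from a binomial estimate or from choosing $u$ among $d$ elements.
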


\SCC*

\begin{proof}
Put $\lambda=25(r+1)$ and $d=4(r+1)$. By
Lemmas~\ref{lem:density} and~\ref{lem:vc}, every restriction is
$\lambda$-linear and has VC-dimension at most $d$. Assign every
distinct nonempty trace $E$ an injective signature $\sigma(E)$ using
Lemma~\ref{lem:signatures}.
We now use a standard Clarkson--Shor sampling argument~\cite{clarkson1989applications}.
Sample every member of $S$ independently with probability $\rho=1/(k+1)$,
obtaining $Y$. A nonempty trace $E$, with $|E|\le k$, \emph{survives} if
$E\cap Y=\sigma(E)$. Its survival probability is at least
\[
\rho^d(1-\rho)^k\ge \mathrm e^{-1}(k+1)^{-d}.
\]
Distinct surviving traces have distinct traces on $Y$, namely their
signatures. At most one nonempty trace has the empty signature, and the empty
trace, if it occurs, is left unsigned. Every other surviving trace is a
distinct nonempty trace in the restriction to $Y$ and has size at most $d$,
so Lemma~\ref{lem:signatures} bounds their number by $b_{\lambda,d}|Y|$. Taking
expectations and using $\mathbb E|Y|=|S|/(k+1)$ gives
\[
\max\{0,\nu_{\le k}(S)-2\}
\le \mathrm e b_{\lambda,d}|S|(k+1)^{d-1}.
\]
Since $\mathrm e b_{\lambda,d}=(r+2)^{O(r+1)}$, this is the stated bound. The
injectivity of the signatures is important: it prevents distinct original
traces from colliding after sampling.
\end{proof}

\begin{corollary}
\label{cor:small-nets}
For fixed $r$, the dual set system has $\epsilon$-nets of size
$O_r(1/\epsilon)$. The classical VC theorem alone gives the weaker explicit
bound $O((r+1)\epsilon^{-1}\log(1/\epsilon))$.
\end{corollary}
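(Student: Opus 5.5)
The corollary has two parts. The stronger bound $O_r(1/\epsilon)$ will come from Theorem~\ref{thm:scc}, which gives constant shallow-cell complexity, combined with the $\epsilon$-net theorem for set systems of small shallow-cell complexity of Chan et al.~\cite{chan2012weighted}; see also Varadarajan~\cite{V10}. The weaker explicit bound will follow from Lemma~\ref{lem:vc} and the classical theorem of Haussler and Welzl~\cite{HausslerW87}.

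\emph{First part.} Fix $r$ and take $U\subseteq\Gamma$ with $|U|=m$. Theorem~\ref{thm:scc} bounds the number of distinct traces $\Tr_U(p)$ of size at most $k$ by $2+(r+2)^{O(r+1)}m(k+1)^{4r+3}$. Since $m\ge1$ for nonempty $U$, this is $O_r(m\,\poly(k))$, so $\phi(m)=O_r(1)$ in the sense of the definition in Section~\ref{sec:prelims}. The result of Chan et al.\ says that a set system of VC-dimension at most $d$ and shallow-cell complexity $\phi$ has $\epsilon$-nets of size $O((1/\epsilon)\log\phi(1/\epsilon))$, with constants depending on $d$ and on the polynomial in $k$. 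In our setting $d=4(r+1)$ by Lemma~\ref{lem:vc}, and the exponent of $k$ is $4r+3$; both are fixed once $r$ is fixed. With constant $\phi$, the net size is therefore $O_r(1/\epsilon)$.

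The main thing to verify is that the cited theorem applies to the dual set system, whose ground set is $\Gamma$ and whose hyperedges are the traces. It does, because the shallow-cell definition is stated exactly for this dual system and Theorem~\ref{thm:scc} holds for every restriction $S\subseteq\Gamma$. One also has to note that Theorem~\ref{thm:scc} counts traces of size at most $k$ for all $k\ge1$, which is exactly what the definition quantifies over. The additive constant $2$ is absorbed once $m\ge1$.

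\emph{Second part.} By Lemma~\ref{lem:vc}, every restriction has VC-dimension at most $4(r+1)$. Haussler--Welzl then gives nets of size $O(d\epsilon^{-1}\log(1/\epsilon))$, which is $O((r+1)\epsilon^{-1}\log(1/\epsilon))$ and holds for every $\epsilon\le1/2$. For larger $\epsilon$ the bound is trivially a constant.
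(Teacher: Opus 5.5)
Your proposal is correct and follows the paper's proof. Theorem~\ref{thm:scc} gives constant shallow-cell complexity for fixed $r$, which feeds the quasi-uniform-sampling net theorem of Chan et al., while Lemma~\ref{lem:vc} together with Haussler--Welzl gives the weaker explicit bound. Your additional checks are the details the paper leaves implicit: that the bound holds for every restriction $S\subseteq\Gamma$, that the additive $2$ is absorbed when $m\ge1$, and that the constant depends on $d=4(r+1)$ and on the exponent $4r+3$.
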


\begin{proof}
The first assertion follows from the shallow-cell quasi-uniform-sampling
theorem~\cite{chan2012weighted}; the second follows from
Lemma~\ref{lem:vc} and the classical $\epsilon$-net theorem.
\end{proof}

\section{Weighted problems}

For weighted Discrete Independent Set, write $w_\alpha=w(\alpha)$ and let $H$
be the conflict graph on $\Gamma$: two regions are adjacent if some point of
$P$ belongs to both.
Consider the packing relaxation
\[
\begin{array}{ll}
\text{maximize} & \displaystyle\sum_{\alpha\in\Gamma}w_\alpha x_\alpha,\\[0.3em]
\text{subject to}
& \displaystyle\sum_{\alpha\in\Gamma(p)}x_\alpha\le1
  \quad\text{for every }p\in P,\\
& 0\le x_\alpha\le1\quad\text{for every }\alpha\in\Gamma.
\end{array}
\]
The argument below follows the sampling-and-rounding framework of Chan and
Har-Peled~\cite{ChanH12}, with Lemma~\ref{lem:density} replacing their
union-complexity bound.
\begin{lemma}[Fractional conflict density]
\label{lem:fractional-density}
For every $U\subseteq\Gamma$, every feasible LP solution satisfies
\[
\sum_{\alpha\beta\in E(H[U])}x_\alpha x_\beta
\le c_1(r+1)\sum_{\alpha\in U}x_\alpha
\]
for an absolute constant $c_1$.
\end{lemma}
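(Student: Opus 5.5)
The plan is to use a Clarkson--Shor style random sample, driven by the LP values, that converts conflict edges of $H[U]$ into two-element traces. We then apply the hereditary density bound of Lemma~\ref{lem:density}(i) to the sample and compare expectations.

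Fix $U$ and a feasible LP solution $x$. For every edge $\alpha\beta\in E(H[U])$ choose a witness point $p_{\alpha\beta}\in P$ lying in both regions. Sample a random subfamily $Y\subseteq U$ by including each $\alpha\in U$ independently with probability $x_\alpha/2$. The halving is deliberate: it keeps the survival probability of the remaining regions bounded away from zero even when some $x_\gamma=1$. The conflict edge $\alpha\beta$ becomes the two-element trace $\Tr_Y(p_{\alpha\beta})=\{\alpha,\beta\}$ when three things happen: $\alpha$ is sampled, $\beta$ is sampled, and no other region of $\Gamma(p_{\alpha\beta})\cap U$ is sampled. In that case $\alpha\beta\in E(D(Y))$. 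By independence, and using $\prod(1-y_i)\ge 1-\sum y_i$, this event has probability
\[
\frac{x_\alpha x_\beta}{4}\prod_{\gamma\in\Gamma(p_{\alpha\beta})\cap U\setminus\{\alpha,\beta\}}\Bigl(1-\frac{x_\gamma}{2}\Bigr)
\;\ge\;\frac{x_\alpha x_\beta}{4}\Bigl(1-\frac12\sum_{\gamma\in\Gamma(p_{\alpha\beta})}x_\gamma\Bigr)\;\ge\;\frac{x_\alpha x_\beta}{8}.
\]
The last step uses the LP constraint $\sum_{\gamma\in\Gamma(p)}x_\gamma\le1$.

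Distinct conflict edges are distinct vertex pairs, so each edge of $D(Y)$ is charged at most once. Hence
\[
\mathbb E\,|E(D(Y))|\ \ge\ \frac18\sum_{\alpha\beta\in E(H[U])}x_\alpha x_\beta .
\]
On the other hand, $D(Y)$ is built from the traces $\Tr_Y(p)=\Gamma(p)\cap Y$, so Lemma~\ref{lem:density}(i) applies to $S=Y$ and gives $|E(D(Y))|\le 25(r+1)|Y|$ deterministically. Taking expectations with $\mathbb E|Y|=\frac12\sum_{\alpha\in U}x_\alpha$ and combining the two bounds yields the claim with $c_1=100$.

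The only delicate point is obtaining a constant lower bound on the probability that no other region in the witness trace is sampled. Sampling with probability exactly $x_\gamma$ would fail when some $x_\gamma$ equals $1$. Scaling by $1/2$ together with the LP constraint handles this, and everything else is direct.
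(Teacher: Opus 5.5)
Your proposal is correct and matches the paper's proof: you sample each region with probability $x_\alpha/2$, lower-bound the probability that a conflict edge becomes a two-element trace by $x_\alpha x_\beta/8$ using $\prod(1-y_i)\ge1-\sum y_i$ and the LP constraint, and compare expectations against Lemma~\ref{lem:density}(i) applied to $D(Y)$. Your explicit constant $c_1=100$ correctly follows from $c=25$ and $\mathbb{E}|Y|=\frac12\sum_{\alpha\in U}x_\alpha$.
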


\begin{proof}
Sample each $\alpha\in U$ independently with probability $x_\alpha/2$,
obtaining $Y$. For every conflict edge $\alpha\beta$, fix a trace containing
both endpoints. The probability that $\alpha$ and $\beta$ are sampled and no
other member of that trace in $U$ is sampled is at least
$x_\alpha x_\beta/8$. This follows from
$\prod_i(1-y_i)\ge1-\sum_i y_i$, applied with
$y_\gamma=x_\gamma/2$ for the other members $\gamma$ of the trace, and the LP
constraint. On this event,
$\alpha\beta$ is a two-element trace in the restriction to $Y$. By
Lemma~\ref{lem:density}, $D(Y)$ has $O((r+1)|Y|)$ edges. Comparing
expectations proves the claim.
\end{proof}

\WSC*

\begin{proof}
For (i), selecting minimum-weight regions covering every point is the
minimum-weight hitting-set problem in the dual set system. Theorem~\ref{thm:scc}
supplies the hereditary shallow profile
$O_r(1+m(k+1)^{4r+3})$. For fixed $r$, the
derandomized quasi-uniform-sampling algorithm of Chan et
al.~\cite{chan2012weighted} gives a polynomial-time $C_r$-approximation. Here
$C_r$ is a constant for every fixed $r$; we do not claim a uniform polynomial
dependence of $C_r$ on $r$.

For (ii), solve the displayed packing LP and discard vertices with
$x_\alpha=0$. Lemma~\ref{lem:fractional-density}, applied to every induced
subgraph, implies that every nonempty induced subgraph contains a vertex whose
neighbour $x$-mass is $O(r+1)$. Indeed, summing each neighbour mass with
weight $x_\alpha$ counts every term $x_\alpha x_\beta$ twice, so the claim
follows by averaging. Repeatedly remove such a vertex and record the
elimination order.

Process the vertices in reverse order. Independently propose $\alpha$ with
probability $\rho_\alpha=x_\alpha/(c_2(r+1))$, for a sufficiently large absolute
constant $c_2$, and accept it if none of its already processed neighbours was
proposed. When $\alpha$ is processed, its already processed neighbours have
total $x$-mass $O(r+1)$. A union bound shows that, conditional on $\alpha$
being proposed, it is accepted with probability at least $1/2$. Thus
\[
\Pr[\alpha\text{ is accepted}]
=\Omega\left(\frac{x_\alpha}{r+1}\right).
\]
The accepted regions form a feasible independent set. Linearity of
expectation gives expected weight
$\Omega((r+1)^{-1}\sum_\alpha w_\alpha x_\alpha)$.

For completeness, this rounding can be derandomized directly. Let
$Z_\alpha$ be the proposal indicator and let $N_H^+(\alpha)$ be the already
processed neighbours of $\alpha$. The weight of the accepted set is
\[
\Phi(Z)=\sum_\alpha w_\alpha Z_\alpha
\prod_{\beta\in N_H^+(\alpha)}(1-Z_\beta).
\]
Under independent proposal probabilities $\rho_\alpha$, every conditional
expectation of $\Phi$ is computable in polynomial time. Fixing the indicators
one at a time to preserve this conditional expectation produces a
deterministic independent set of at least the expected weight, proving (ii).
\end{proof}

\section{Conclusion}
We have shown that the piercing degree alone controls the locality structures
needed for geometric optimization. For every fixed piercing degree, the
standard local-search algorithms yield PTASs for unweighted Set Cover and
Discrete Independent Set. The same parameter gives polynomial shallow-trace
bounds, a constant-factor approximation for weighted Set Cover, and an
$O(r+1)$-approximation for weighted Discrete Independent Set.

The proof separates the algorithm from its analysis: the algorithm remains
the usual exhaustive local search, while minimum locality graphs, parity
separators, and the adaptive recursive division are used only to certify that
a locally optimal solution is globally near-optimal. 

Several questions remain open. In particular, it would be interesting to
obtain comparable guarantees for Hitting Set and to understand how far the
bounded-piercing hypothesis can be relaxed for axis-parallel rectangles and
other classical geometric families.

\paragraph{Use of generative AI.}
The authors used ChatGPT iteratively in developing the results presented in this paper. 
The ideas in the initial draft were corrected, simplified and substantially rewritten by the authors. Claude was also used to review the entire paper. The authors have checked all the proofs and take full responsibility for the correctness of results presented here.

\bibliography{ref}
\end{document}